\newtheoremstyle{mytheorem}{3pt}{3pt}{\slshape}{}{\bfseries}{}{.5em}{}
\theoremstyle{mytheorem}
\newtheorem{lemma}{Lemma}
\newtheorem{theorem}{Theorem}
\newtheorem{observation}{Observation}
\newtheorem{remark}{Remark}
\theoremstyle{definition}
\def\ann{$\mathcal{A}$}
\def\denseitems{
    \itemsep1pt plus1pt minus1pt
    \parsep0pt plus0pt
    \parskip0pt\topsep0pt}
\newbox\ProofSym \setbox\ProofSym=\hbox{%
  \unitlength=0.18ex%
  \begin{picture}(10,10) \put(0,0){\framebox(9,9){}}
    \put(0,3){\framebox(6,6){}}
  \end{picture}}
\title{Maximum-Width Rainbow-Bisecting Empty Annulus\thanks{%
A preliminary version is accepted in EuroCG 2021 and the expanded version is accepted in the journal Computational Geometry: Theory and Applications.}}
\author{Sang Won Bae$^1$\thanks{supported by Basic Science Research Program through the National Research Foundation of Korea (NRF) funded by the Korea government(MSIT)(No.
RS-2023-00251168).}  Sandip Banerjee$^2$\thanks{acknowledges the support by Polish National Science Centre (NCN) Grant 2020/39/B/ST6/01641.}
Arpita Baral$^3$  Priya Ranjan Sinha Mahapatra$^4$\thanks{acknowledges the support received from
MATRICS (Ref. No. MTR/2019/000792) grant of Science and Engineering Research Board (SERB), Government of India.}  Sang Duk Yoon$^5$}
\date{%
    $^1$Division of Computer Science and Engineering, Kyonggi University, Korea.\\%
    $^2$ Institute of Informatics, University of Wroc\l aw, Poland.\\%
    $^3$ Department of Computer Science and Engineering,
    NSHM Knowledge Campus (Group of Institutions), Durgapur, India.\\
    $^4$ Department of Computer Science and Engineering, University of Kalyani, India.\\
    $^5$ Department of Service and Design Engineering, Sungshin Women\textquotesingle s University, Korea.\\
    \today
}
\begin{document}
\maketitle              

\begin{abstract}
Given a set of $n$ colored points with $k$ colors in the plane,
we study the problem of computing a maximum-width rainbow-bisecting empty annulus (of objects specifically axis-parallel square, axis-parallel rectangle and circle) problem. We call a region \emph{rainbow} if it contains
at least one point of each color. 
The maximum-width rainbow-bisecting empty
annulus problem asks to find an annulus $A$ 
of a particular shape with maximum possible width such that $A$ does not contain any input points and it bisects the input point set into two parts, each of which is a \emph{rainbow}.
We compute a maximum-width rainbow-bisecting empty axis-parallel 
square, axis-parallel rectangular and circular annulus in $O(n^3)$ time using $O(n)$ space, 
in $O(k^2n^2\log n)$ time using $O(n\log n)$ space
and in $O(n^3)$ time using $O(n^2)$ space respectively. 
\end{abstract}

\section{Introduction}
In the context of facility location most of the existing literature deals with the placement of the
facilities (e.g. pipelines) among a set of customers (represented by points in $\mathbb{R}^2$)
but there are scenarios where the facilities are hazardous (e.g pipelines transporting toxic materials). In these scenarios the objective is maximizing the minimal distance between the hazardous facility
and the given customers. 
Considering this situation, some problems have been studied in literature that computes a widest $L$-shaped empty corridor~\cite{c-welsc-96}, a widest 1-corner corridor~\cite{dls-fwe1cc-06}, a largest empty annulus~\cite{dhmrs-leap-03}.
For empty annulus we additionally impose another constraint that each of the two non-empty regions
corresponds to two \emph{self-sustained} smart cities, meaning that each of the smart cities is comprised of essential facilities of each type such as schools, hospitals, etc. This situation calls for the computation of two rainbow regions separated by an empty region where empty region contains hazardous facility and each color represents
each essential facilities in each of the rainbow region.
Given a set of $n$ points in $\mathbb{R}^2$, each point colored with one of the $k$ ($k\leq n$) colors, a \emph{rainbow} (or color spanning) region in $\mathbb{R}^2$ contains at least one point of each color. Abellenas et al. first proposed algorithms for computing a smallest color-spanning axis-parallel rectangle~\cite{ahiklmps-scso-2001}, narrowest strip~\cite{ahiklmps-scso-2001}, circle~\cite{ahiklmps-tfcvdarp-2001}  in $O(n(n-k)\log^2 k)$, $O(n^2\alpha(k)\log k)$ and $O(kn\log n)$ respectively.
 Later the time complexities to compute a 
smallest color-spanning rectangle and a narrowest color-spanning strip were improved
to $O(n(n-k)\log k)$ and $O(n^2\log n)$ by Das et al.~\cite{dgn-scsor-09}.
Kanteimouri et al.~\cite{kmak-ctscsaps-2013} computed a smallest axis-parallel color-spanning square in $O(n \log^2 n)$ time. Hasheminejad et.al~\cite{hkm-ctscset-2015} proposed an $O(n \log n)$ time algorithm to compute a smallest
color-spanning equilateral triangle.
The shortest color-spanning interval, smallest color-spanning $t$ squares and 
 smallest color-spanning $t$ circles have been studied by Banerjee et al.~\cite{bmn-csoaahr-2016} and they have given some
hardness and tractability results from parameterized complexity point of view. 
Acharyya et al.~\cite{anr-mwcsa-18} computed a minimum-width color-spanning annulus for circle, equilateral triangle, axis-parallel square and rectangle in $O(n^3\log n)$, $O(n^2k)$, $O(n^3+n^2k\log k)$ and $O(n^4)$ time respectively. D\`{i}az-B\'{a}\~{n}ez
et al.~\cite{dhmrs-leap-03} first studied the largest empty circular annulus problem and
proposed an $O(n^3\log n)$ time algorithm.
 They later improved the time complexity to $O(n^3)$ time~\cite{bls-laop-06}. 
 Bae et al.~\cite{bbm-mwesra-2021} proposed algorithms for computing a maximum-width empty axis-parallel 
 square and rectangular annulus in $O(n^3)$ and $O(n^2 \log n)$ time respectively. Recently Erkin et al.~\cite{ABCCKMS18} studied a variant of a covering problem where the objective is to cover a set of points by a conflict free set of objects, where an object is said to be \emph{conflict free} if it covers at most one point from each color class.\\
 
In this paper,  we mainly address three problems - (I) computing a maximum-width rainbow-bisecting empty 
 axis-parallel square annulus, (II) computing a maximum-width rainbow-bisecting empty 
 axis-parallel rectangular annulus and (III) computing a maximum-width rainbow-bisecting empty circular annulus.
 For the last problem we also consider an additional case where the center of the empty circular annulus will lie on a given query line in the plane.
Problem I is solved in $O(n^3)$ time using $O(n)$ space.
Note that our Problem I generalizes the problem of computing 
a largest \emph{rainbow-bisecting empty L-shaped corridor}.
Hence, we also consider the corridor problem as a sub-case of Problem I
and present an $O(n \log n)$-time algorithm.
This siginificantly improves the time complexity of the $O(n^2\log n)$-time algorithm for
the widest empty axis-parallel L-shaped corridor problem~\cite{bbm-mwesra-2021}.
For problems II and III, we propose algorithms that run in $O(k^2n^2\log n)$ time using $O(n\log n)$ space
and in $O(n^3)$ time using $O(n^2)$ space respectively.

\section{ Definition and Terminologies}\label{sec:problemdef}
We are given a set $P$ containing $n$ points in $\mathbb R^{2}$ where each 
point is colored with one of the given colors $\{1, \ldots, k\}$ and 
$k \leq n/2$. 
For each color $i\in [k]$,
we assume that there are at least two points in $P$ of color $i$. 
For any point $p \in P$,
we denote its $x$-, $y$-coordinates, and color by $x(p)$, $y(p)$ and $\alpha(p)$ respectively.
For any two points $p$ and $q$ in $P$, $x$- and $y$-gap
between $p$ and $q$ is defined as $|x(p)-x(q)|$ and $|y(p)-y(q)|$, respectively.

Let $S$ be an axis-parallel square.
The four sides of $S$ are called the top, bottom, left, and right sides, respectively,
according to their relative position in $S$.
The center of $S$ is defined as the intersection between two diagonals 
of $S$, and the radius of $S$ is defined as half of its side length.
The \emph{offset} of $S$
by a real number $\delta \geq 0$ is defined as a smaller axis-parallel square
obtained by sliding each side of $S$ inwards by $\delta$.

An \emph{axis-parallel square annulus} (Figure~\ref{fig:annulus}(a)) \ann~ is the region between an axis-parallel square $S_\mathrm{out}$ and its offset $S_\mathrm{in}$ 
by a real number $\delta \geq 0$.
We call $S_\mathrm{out}$, $S_\mathrm{in}$, and $\delta$ the \emph{outer} square, \emph{inner} square,
and the \emph{width} of the annulus, respectively.
Note that $S_\mathrm{out}$ and $S_\mathrm{in}$ are concentric, so
the center of $S_\mathrm{out}$ and $S_\mathrm{in}$ can be treated as
the center of the annulus.
We allow the outer and inner squares of an
axis-parallel square annulus \ann~ to
have one or more sides \emph{at infinity} ($\pm\infty$)
which means the associated coordinate value of that side is infinite. 

\begin{figure}[h]
	\centering
	\includegraphics[width=.55\textwidth]{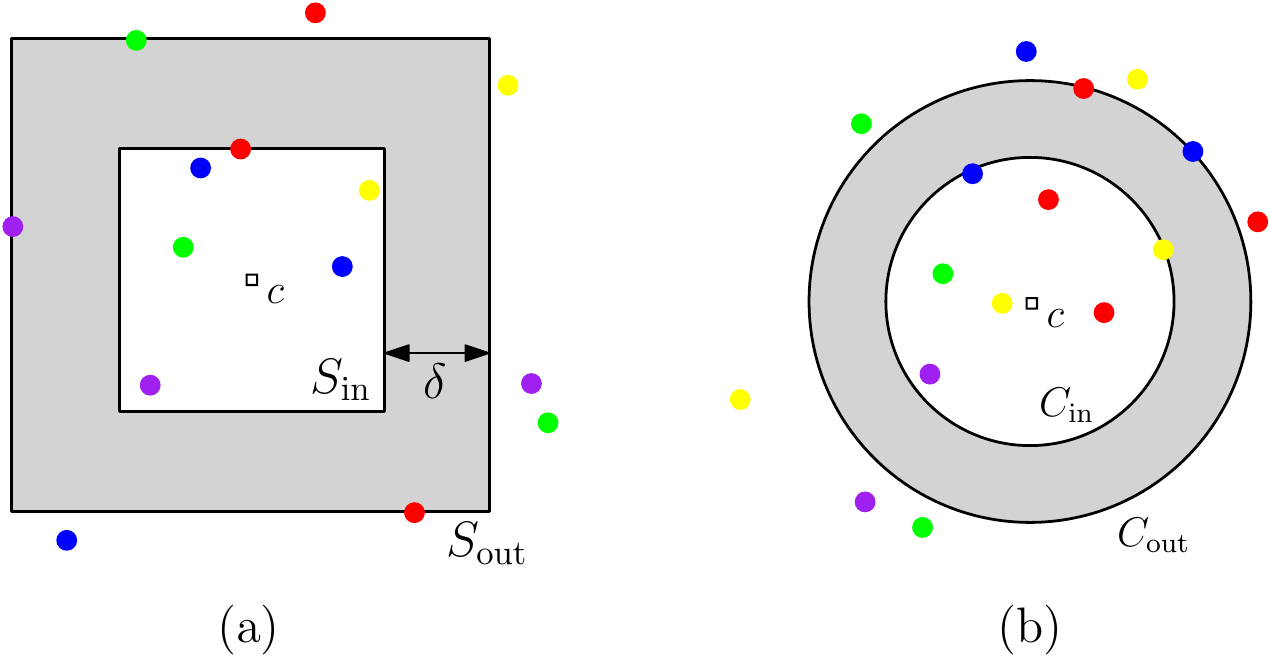}
	\caption{(a) A RBSA of width $\delta$
		with outer and inner squares $S_\mathrm{out}$ and $S_\mathrm{in}$.
		(b) A RBCA with outer and inner circles $C_\mathrm{out}$ and $C_\mathrm{in}$.
		For both annuli, $c$ denotes the center.}
	\label{fig:annulus}
\end{figure}

For the rectangular annulus, we consider the definition of Mukherjee
et al.~\cite{jmkd-mwra-2012}.
An \emph{axis-parallel rectangular annulus} is the region obtained by subtracting 
the interior of an axis-parallel rectangle $R_{in}$ from another axis-parallel rectangle 
$R_\mathrm{out}$ such that $R_\mathrm{in} \subseteq R_\mathrm{out}$.
We call $R_\mathrm{out}$ and $R_\mathrm{in}$ the \emph{outer rectangle} and \emph{inner rectangle} of
the annulus, respectively.
Consider a rectangular annulus \ann~defined by its outer and inner rectangles, $R_\mathrm{out}$ and $R_\mathrm{in}$ respectively.
By our definition, note that $R_\mathrm{out}$ and $R_\mathrm{in}$ defining annulus \ann~do not have to be concentric, 
so that \ann~may not be a symmetric shape.
The \emph{top-width} of \ann~is the vertical distance between the top sides of $R_\mathrm{out}$ and $R_\mathrm{in}$, 
and the \emph{bottom-width} of \ann~is the vertical distance between their bottom sides.
Analogously, the \emph{left-width} and \emph{right-width} of \ann~are defined to be 
the horizontal distance between the left sides of $R_\mathrm{out}$ and $R_\mathrm{in}$ and the right sides of $R_\mathrm{out}$ and $R_\mathrm{in}$, respectively.
Then, the \emph{width} of \ann~ is defined to be the minimum of the four values: 
the top-width, bottom-width, left-width, and right-width of \ann.
See \figurename~\ref{fig:rectann} for an illustration.
\begin{figure}[t]
	\centering
	\includegraphics[width=.35\textwidth]{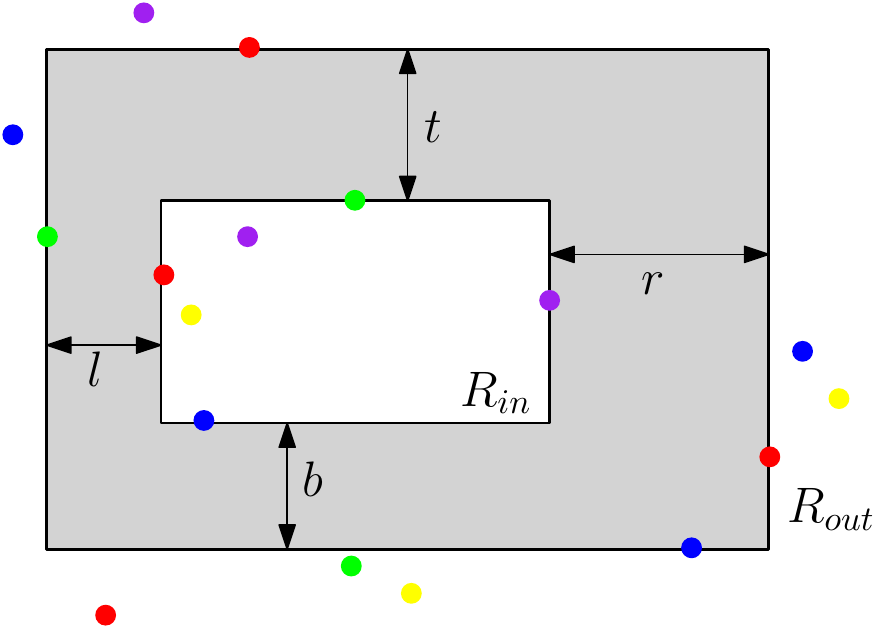}
	\caption{ A RBRA with outer and inner rectangles $R_\mathrm{out}$ and $R_\mathrm{in}$
		whose top-, bottom-, left-, right-widths are $t$, $b$, $l$, and $r$,
		respectively.}
	\label{fig:rectann}
\end{figure}

A \emph{circular annulus} \ann~ is the region between two concentric circles $C_\mathrm{out}$ (outer circle) and 
$C_\mathrm{in}$ (inner circle) where $C_\mathrm{in} \subseteq C_\mathrm{out}$.
The \emph{width} of a circular annulus \ann~ is defined to be the difference of the radii
of its outer and inner circles (see Figure~\ref{fig:annulus}(b)).

An  annulus \ann~is said to be
\emph{rainbow-bisecting empty} if it does not contain any point of $P$ 
in its interior and divides $P$ into two
non-empty subsets such that each subset is a rainbow.
One subset of $P$ lies within or on the boundary of the
inner square/rectangle/circle of \ann~and the other subset of $P$
lies outside or on the boundary of the outer square/rectangle/circle of \ann. 
We now refer to an axis-parallel square (resp. rectangle) as square (resp. rectangle)
for simplicity.
Any \emph{rainbow-bisecting empty square annulus},
\emph{rainbow-bisecting empty rectangular annulus}
and \emph{rainbow-bisecting empty circular annulus} are denoted by 
RBSA, RBRA and RBCA, respectively.

\section{Maximum-Width Rainbow-Bisecting Empty Square Annulus} \label{sec:sq}
In this section, we compute a maximum-width RBSA from a given point set $P$ on $\mathbb{R}^2$. 

The following lemma is analogous to Observation 1 in Bae et al.~\cite{bbm-mwesra-2021} but is introduced for the completeness of the paper.
\begin{lemma} \label{obs:sq_conf}
	There is a maximum-width RBSA such that the inner and the outer squares 
	each contains a point of $P$ on their boundaries
	and one of the following three conditions is satisfied.
	\begin{itemize}
		\item C$_{1}$: The sides of the outer square containing no point of $P$ 
		are at infinity (Figure~\ref{fig-cases-sq}(a)).
		\item C$_{2}$: Two adjacent sides of the outer square contains a point of $P$, 
		and the remaining sides at infinity (Figure~\ref{fig-cases-sq}(b)).
		\item C$_{3}$: Two opposite sides of the outer square contains a point of $P$
		(Figure~\ref{fig-cases-sq}(c)).
	\end{itemize}
\end{lemma}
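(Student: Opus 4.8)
The plan is to start from an arbitrary maximum-width RBSA $A$ with center $c$, outer radius $r$, and width $\delta$, and to transform it into a canonical one without changing $\delta$. The first step I would carry out is to show that both the inner and the outer square already carry a point of $P$ on their boundary. Suppose $S_{\mathrm{in}}$ has no point of $P$ on its boundary. Then, keeping $c$ and $S_{\mathrm{out}}$ fixed, I shrink $S_{\mathrm{in}}$ by decreasing its radius by a small $\varepsilon>0$, i.e.\ increasing the width to $\delta+\varepsilon$. Since no point lies on the old boundary of $S_{\mathrm{in}}$, every inner point is at positive distance from it, so for small $\varepsilon$ the newly exposed shell stays empty and the inner point set is unchanged; hence the new annulus is still rainbow-bisecting empty but strictly wider, contradicting maximality. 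A symmetric argument, this time enlarging $S_{\mathrm{out}}$ by increasing $r$ while keeping $S_{\mathrm{in}}$ fixed (so that $\delta$ grows and the inner radius $r-\delta$ is unchanged), shows that $S_{\mathrm{out}}$ must also carry a point of $P$. This establishes the first sentence of the lemma.

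For the contact pattern I would use a continuous sliding argument, whose key enabler is the observation that, as long as a motion of $A$ keeps the annulus empty, no point of $P$ can change sides: to pass from the inner set to the outer set (or vice versa) a point would have to cross the open annulus region, which emptiness forbids, since for $\delta>0$ the closed inner square and the closed exterior of $S_{\mathrm{out}}$ are disjoint and separated by that open region. Consequently, throughout any emptiness-preserving motion the two sides remain exactly the same point subsets, so both stay rainbow automatically and I only have to control emptiness and the geometry. Fixing the width at $\delta$, the outer square still has three degrees of freedom (its center and radius), so I would slide $A$ within the family of width-$\delta$ empty annuli that retain all current boundary contacts. Whenever that family has a free direction, I slide along it until an \emph{event} occurs: either a new point of $P$ lands on the boundary of $S_{\mathrm{in}}$ or $S_{\mathrm{out}}$, or a point-free side of $S_{\mathrm{out}}$ recedes to infinity. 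Each event removes one degree of freedom, so the process terminates.

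It then remains to check that every terminal configuration matches one of the three patterns. I would organize this by how many sides of $S_{\mathrm{out}}$ are finite and how the contacts are distributed. If $S_{\mathrm{out}}$ is fully bounded, freezing the motion pins its size and position, which forces points on a pair of opposite sides and yields C$_3$. If exactly two sides are finite, they are either adjacent, giving the L-shaped corridor case in which each finite side is pinned by a point and the two infinite sides are point-free, matching C$_2$, or opposite, a two-strip configuration subsumed by C$_3$. In the remaining situations, the terminal state is one in which no point-free finite side can persist, so every point-free side has been driven to infinity, which is precisely C$_1$.

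The step I expect to be the main obstacle is this terminal case analysis together with the bookkeeping of the degenerate side-at-infinity limits. I must verify that in each branch a genuine sliding direction survives until one of the three stated patterns is reached, that the limits where sides escape to infinity are taken consistently with the definition of a width-$\delta$ square annulus, and that emptiness, and hence the rainbow property, is never lost at an event. By contrast, the degree-of-freedom counting and the argument that both squares touch a point are routine; reconciling the infinite-side degeneracies with the trichotomy is where the care is needed.
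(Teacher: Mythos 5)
Your first part (both squares must touch a point of $P$) is correct and matches the paper's argument, and your observation that an emptiness-preserving motion of a positive-width annulus cannot let any point change sides---so the rainbow property is automatic---is also correct and is implicitly what the paper relies on. The genuine gap is in the design of your sliding process: you require the motion to \emph{retain all current boundary contacts}, including contacts on the inner square $S_{\mathrm{in}}$. That rule lets inner contacts freeze the annulus long before one of C$_1$, C$_2$, C$_3$ is reached. Concretely, suppose the width-$\delta$ annulus has a point $p$ on the top side of $S_{\mathrm{out}}$, a point $q$ on the bottom side of $S_{\mathrm{in}}$, and a point $s$ on the left side of $S_{\mathrm{in}}$ (your process can reach this state: starting from contacts $p$ and $q$, the constraints $y_c+r=y(p)$ and $y_c-(r-\delta)=y(q)$ already fix $r$ and $y_c$, leaving only horizontal translation, and the next event may well be a point landing on the left side of $S_{\mathrm{in}}$). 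The three contacts give $y_c+r=y(p)$, $y_c-(r-\delta)=y(q)$, $x_c-(r-\delta)=x(s)$, which determine $x_c$, $y_c$, $r$ completely, so the contact-preserving family is a single annulus and your process terminates there. Yet only one side of $S_{\mathrm{out}}$ touches a point and no side is at infinity, so this terminal state satisfies none of C$_1$, C$_2$, C$_3$; in particular your claim ``$S_{\mathrm{out}}$ fully bounded $\Rightarrow$ points on a pair of opposite sides, hence C$_3$'' is false for the process as you defined it. (A smaller slip in the same case analysis: for a square, ``exactly two finite sides, opposite'' cannot occur, since two opposite finite sides force a finite side length and hence all four sides finite.)

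The repair is what the paper's proof does: during the deformation, keep only the \emph{outer} contacts. Pin the side(s) of $S_{\mathrm{out}}$ that already carry points, keep the width equal to $\delta$, and enlarge both squares; in this motion $S_{\mathrm{in}}$ only grows, so a point on its boundary either stays on a pinned side or drops harmlessly into the interior of $S_{\mathrm{in}}$---emptiness and the bisection are preserved, and nothing can stop the enlargement except a point of $P$ reaching the boundary of $S_{\mathrm{out}}$ or the point-free sides escaping to infinity. This is exactly what yields the trichotomy: one outer contact leads to C$_1$ or to a second contact; two adjacent outer contacts lead to C$_2$ or to an opposite pair; an opposite pair is C$_3$. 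The condition that the final inner square still touches a point, which your retained-contact rule was trying to enforce by fiat, instead comes for free from maximality: if the final $S_{\mathrm{in}}$ had no point of $P$ on its boundary, shrinking it until it first hits a point (the inner region is nonempty, being a rainbow) would give an empty rainbow-bisecting annulus of width strictly greater than $\delta$, contradicting that $\delta$ is maximum.
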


\begin{figure}[h]
	\centering
	\includegraphics[width=.65\textwidth]{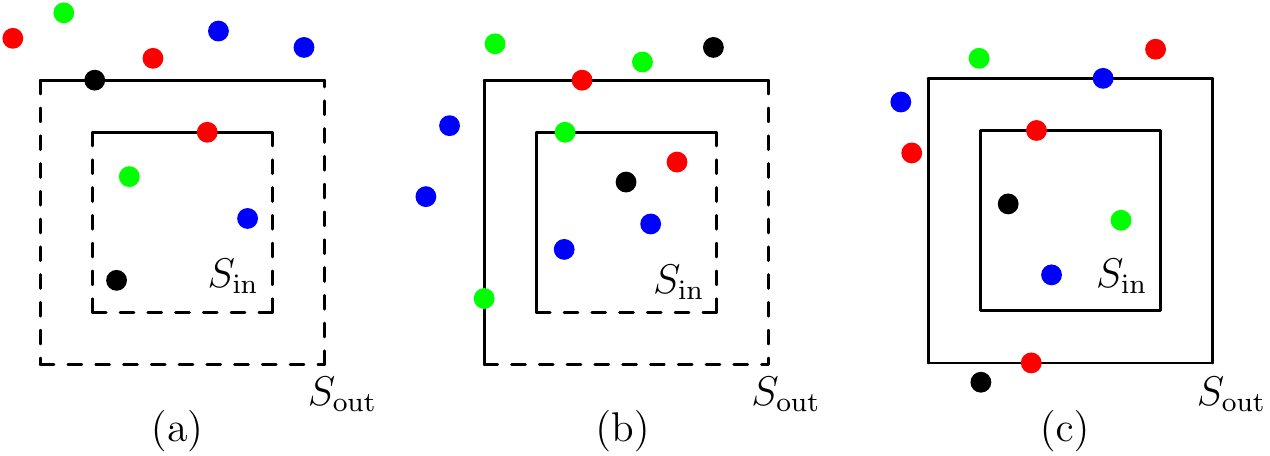}
	\caption{
		Three possible configurations of a maximum-width RBSA. 
		The dashed sides represent the sides at infinity.
		(a) C$_{1}$ Configuration (b) C$_{2}$ Configuration (c) C$_{3}$ Configuration}
	\label{fig-cases-sq}
\end{figure}

\begin{proof}
We start by proving the fact that the
	outer and inner squares of a maximum-width RBSA
	each contains a point of $P$ on their boundaries.
	For a contradiction,
	we assume that there is no point on the boundaries of 
	the outer square $S_\mathrm{out}$ and the
	inner square $S_\mathrm{in}$ of a maximum-width RBSA, \ann.
	Then, we can shrink $S_\mathrm{in}$ until the boundary 
	of $S_\mathrm{in}$ hits a point in $P$,
	while keeping the center of \ann~fixed.
	In this process no point goes outside from the inside of $S_\mathrm{in}$, so
	\ann~remains empty.
	But the width of \ann~increases which contradicts our assumption.
	Similarly, we can increase the width of  \ann~by extending 
	$S_\mathrm{out}$ while keeping the center of \ann~fixed, a contradiction.
	So there must be at least one point of $P$ lying on the boundary of $S_\mathrm{in}$
	and on the boundary of $S_\mathrm{out}$. Without loss of generality, we assume that the top side of $S_\mathrm{out}$
	contains of a point of $P$. 
	
	Consider the case that there is no point of $P$ on the remaining sides of
	$S_\mathrm{out}$. We can enlarge both $S_\mathrm{in}$ and $S_\mathrm{out}$
	while maintaining the width of \ann~to be fixed. See Figure~\ref{fig-subcases}(a).
	The point of $P$ on the boundary of $S_\mathrm{in}$ does not affect
	the enlarging process, and the process will be finished when $S_\mathrm{out}$ encounters a point of $P$. If there is no such a point, \ann~belong to C$_1$ configuration.
	
	Assume that the enlarging process is finished with a point $q$ of $P$.
	If $q$ is on the bottom side of $S_\mathrm{out}$, then \ann~belongs 
	to C$_3$ configuration. If $q$ is on the left side of $S_\mathrm{out}$, 
	we can enlarge both $S_\mathrm{in}$ and $S_\mathrm{out}$ further 
	as in Figure~\ref{fig-subcases}(b). If this process is finished with 
	another point of $P$, then \ann~belongs to C$_3$ configuration.
	If there is no such a point, then \ann~belongs to C$_2$ configuration.
	The case when $q$ is on the right side of $S_\mathrm{out}$ 
	can be handled similarly.
	
	\begin{figure}[h]
		\centering
		\includegraphics[width=.55\textwidth]{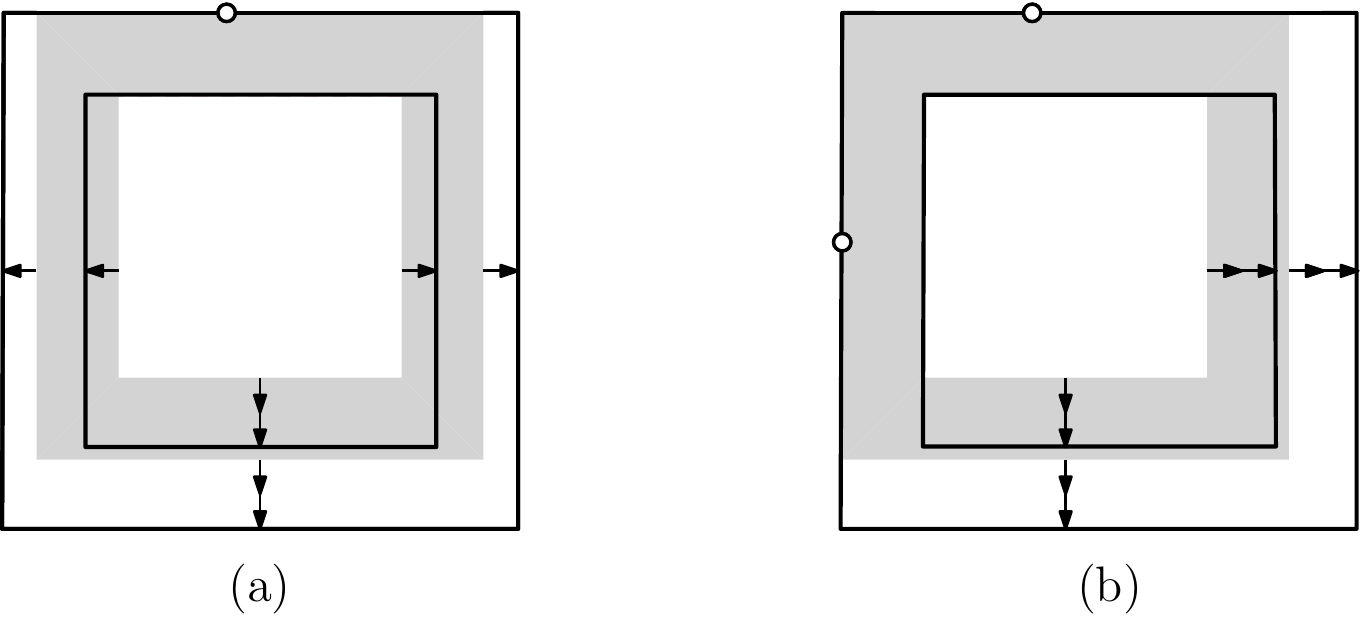}
		\caption{
			Examples of enlarging the inner and the outer squares of a RBSA (gray filled) 
			to get another RBSA (black) with the same width.
		}
		\label{fig-subcases}
	\end{figure}

	Next, we consider the case such that there is another point $q$ of $P$ 
	on a side other than the top side of $S_\mathrm{out}$.
	If $q$ is on the bottom side of $S_\mathrm{out}$, \ann~belongs 
	to C$_3$ configuration. If $q$ is on the left or right side of $S_\mathrm{out}$,
	then we can enlarge $S_\mathrm{in}$ and $S_\mathrm{out}$,
	and conclude that \ann~belongs to C$_2$ or C$_3$ configuration.
\end{proof} 

Thus to find a maximum-width RBSA from a given point set $P$, 
we solve the above three configurations and 
propose different algorithms for handling each of them.
As a part of preprocessing, we sort the points in $P$
using their $x$- and $y$-coordinates, respectively.

Note that any RBSA belonging to configuration C$_{1}$ maps to an empty strip such that  
regions on both side of the strip are rainbow.
We call such RBSA a \emph{rainbow-bisecting empty strip}, RBES.
 
\begin{lemma}\label{lem:cs_emp_sq}
A maximum-width RBES of a sorted set of points
can be computed in $O(n)$ time using $O(n)$ space.
\end{lemma}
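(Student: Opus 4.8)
The plan is to characterize a maximum-width RBES (rainbow-bisecting empty strip) combinatorially, then show it can be found by a single linear scan after the sorting preprocessing. Recall that in configuration C$_1$ the two non-infinite sides of $S_\mathrm{out}$ and the corresponding two sides of $S_\mathrm{in}$ degenerate into a pair of parallel lines bounding an empty strip; without loss of generality I would fix the strip to be horizontal (the vertical case is symmetric and handled by a second identical scan using the $y$-sorted order instead of the $x$-sorted order, or vice versa). So concretely I want to find two horizontal lines $y = a$ and $y = b$ with $a < b$ such that the open strip $\{(x,y) : a < y < b\}$ contains no point of $P$, the closed lower half-plane $y \le a$ is a rainbow, the closed upper half-plane $y \ge b$ is a rainbow, and $b - a$ is maximized.

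First I would observe that in an optimal RBES each bounding line passes through a point of $P$ (otherwise slide it outward to widen the strip, exactly as in the boundary-contact argument of Lemma~\ref{obs:sq_conf}), so both $a$ and $b$ are $y$-coordinates of input points and the strip is an empty gap between two consecutive points in the $y$-sorted order. Hence the candidate strips are exactly the $n-1$ gaps $(y_i, y_{i+1})$ between consecutive points $p_i, p_{i+1}$ in the $y$-sorted list $p_1, \dots, p_n$; for each such gap the maximal empty strip has $a = y_i$ and $b = y_{i+1}$ and width $y_{i+1} - y_i$. The remaining task is to test, for each gap, whether the lower part $\{p_1, \dots, p_i\}$ and the upper part $\{p_{i+1}, \dots, p_n\}$ are each a rainbow, i.e. contain all $k$ colors, and among the gaps passing this test report the one of maximum width.

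The key step is to perform all $n-1$ rainbow tests in $O(n)$ total time by a two-pass sweep. In a forward sweep over the $y$-sorted list I maintain a color-count array indexed by $\{1,\dots,k\}$ together with a counter of how many distinct colors are currently present; inserting $p_i$ updates these in $O(1)$, and I record for each prefix index $i$ the boolean "prefix $\{p_1,\dots,p_i\}$ is a rainbow." A symmetric backward sweep records for each suffix index $i+1$ the boolean "suffix $\{p_{i+1},\dots,p_n\}$ is a rainbow." A gap at index $i$ yields a valid RBES precisely when the prefix-rainbow flag at $i$ and the suffix-rainbow flag at $i+1$ are both true; I then take the maximum $y_{i+1} - y_i$ over all valid gaps in a final $O(n)$ pass. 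The same procedure run on the $x$-sorted list handles vertical strips, and the better of the two answers is returned. All arrays use $O(n)$ space (note $k \le n$), and the sort is already assumed done in preprocessing, so the total running time is $O(n)$.

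I expect the main subtlety — rather than a genuine obstacle — to be the careful handling of the prefix/suffix boundary conditions, namely that points lying \emph{on} the strip's bounding lines are assigned to the appropriate side so that both half-planes remain rainbows, and that the "two non-empty subsets" requirement of a rainbow-bisecting annulus is met (which is automatic here since each side must contain all $k \ge 1$ colors and, by the standing assumption that every color has at least two points, each side is nonempty whenever it is a rainbow). Correctness then follows from the boundary-contact observation, which guarantees the optimum is attained at one of the enumerated gaps.
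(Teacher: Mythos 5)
Your proof is correct. The paper states Lemma~\ref{lem:cs_emp_sq} without any proof (treating it as immediate once the points are sorted), and your argument---the boundary-contact observation restricting candidates to the $n-1$ gaps between consecutive points in sorted order, followed by prefix/suffix color-count sweeps to test the rainbow condition on each side in $O(1)$ amortized time per gap---is precisely the standard argument the paper's claim implicitly relies on, including the correct handling of boundary points and of both the horizontal and vertical strip orientations.
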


Now we discuss the solution technique for C$_{2}$ configuration.
Observe solving this configuration corresponds to the problem of finding a 
widest \emph{rainbow-bisecting} empty axis-parallel L-shaped corridor (RBLC). 
 A RBLC is an axis-parallel empty L-shaped corridor
which partitions the input point set $P$ into two non-empty rainbow subsets.
Note that one can solve the problem by constructing $n^2$ grid points for given $n$ points and have the following trivial result.
\begin{remark}\label{lem:emp_sq_type2}
 Given a set of $n$ points in the plane, each colored with one of the given $k$ colors, a maximum-width RBLC can be computed in $O(n^3)$ time using $O(n)$ space.
 \end{remark}

 \begin{figure}[h]
	\centering
	\includegraphics[width=.8\textwidth]{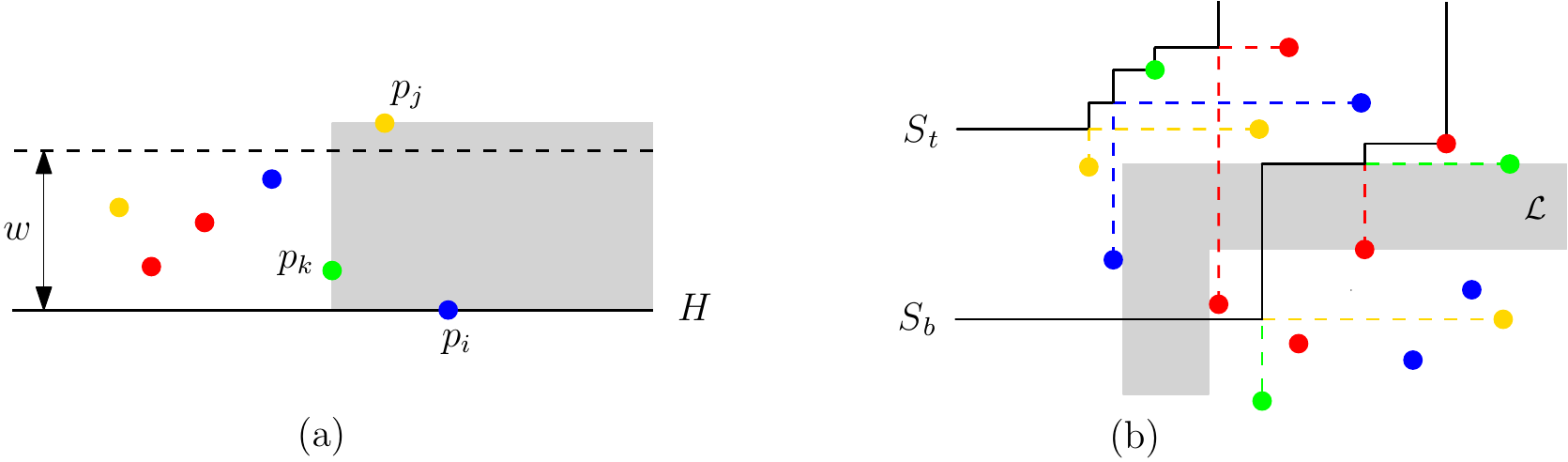}
	\caption{(a) The horizontal part of $\mathcal{L}$ should be contained in 
		the gray region bounded by $p_i$, $p_j$, and $p_k$.
		(b) Two corners of $\mathcal{L}$ should be 
			contained in the region bounded above S$_{t}$ and below S$_{b}$. }
	\label{fig:sqannulus1}
\end{figure}

We design a non-trivial algorithm for computing a maximum-width RBLC. 
Without loss of generality, we only consider the case where 
the RBLC is pointing down and right, and 
its width is determined from the $y$-gap of two points in $P$ 
(Figure~\ref{fig-cases-sq}(b)).
Observe that we can handle all other cases analogously.
Let $P = \{p_1, p_2, \ldots, p_n\}$ be the given set of points,
sorted in the ascending order of their $y$-coordinates,
that is, $y(p_1) \leq y(p_2) \leq \cdots \leq y(p_n)$.
We imagine sweeping a horizontal line $H$ upwards over the plane. 
It starts from the position $y = y(p_1)$ and gradually moves upwards until 
all points in $P$ are visited. 
When the line $H$ sweeps through a point $p_i \in P$, 
we decide the existence of a RBLC $\mathcal{L}$
such that the horizontal part of $\mathcal{L}$ is right above $H$, and
its width is larger than the width of the widest RBLC found so far.
To find such RBLC $\mathcal{L}$, we use the following queries.
Let $Q_a$ be the set of points above $y(p_i)$ and $Q_b$  be the set of remaining points 
$P \setminus Q_a$.

\begin{enumerate}[(i)] \denseitems
 \item \emph{Boundary points query}:  
 	Let $w$ be the width of the widest RBLC found so far.
 	First, we find the lowest point $p_j$ in $Q_a$ such that
 	the $y$-gap between $p_i$ and $p_j$ is larger than $w$.
 	Then $y$-gap between $p_i$ and $p_j$ becomes the width of $\mathcal{L}$.
 	Among the points whose $y$-coordinates are in between $y(p_i)$ and $y(p_j)$, 
 	we find the point $p_k$ with the largest $x$-coordinate.
 	Observe that the horizontal part of $\mathcal{L}$ should be contained in
 	the region bounded by $p_i$, $p_j$, and $p_k$ from
 	below, above, and left, respectively (Figure~\ref{fig:sqannulus1}(a)).

\item \emph{Rainbow range query}: 
Let S$_{b}$ and S$_{t}$ are two staircase structures such that 
any empty axis-parallel L-shaped corridor whose two corners are
lying in the region above S$_{b}$ and below S$_{t}$ is a RBLC
	(Figure~\ref{fig:sqannulus1}(b)).
	After the boundary points query, we compute the intersection 
	$\mathcal{I}_t$ (resp. $\mathcal{I}_b$)
	between the horizontal line $y = y(p_j)$ and S$_t$ (resp. $y = y(p_i)$ and S$_b$). 
	The left (resp. right) side of the vertical part 
	of $\mathcal{L}$ should be on the right (resp. left) of the intersection.

\item \emph{Maximum $x$-gap query}: 
	After the rainbow range query, we find the maximum $x$-gap
	between two consecutive points of $Q_b$ in range 
	$[\max (x(\mathcal{I}_t), x(p_k)),x(\mathcal{I}_b)]$,
	when the points of $Q_b$ are sorted with respect to the $x$-coordinates.
	If the maximum $x$-gap is larger than $y$-gap between $p_i$ and $p_j$, 
	then $\mathcal{L}$ exists.

\end{enumerate}

To answer the above queries we use the following data structures :
\begin{itemize}
 \item 
		The point $p_j$ can be found in $O(\log n)$ time without any additional 
		data structure.
		Let $\mathcal{T}$ be a one dimensional range tree built on 
		the $y$-coordinate values of the points in $P$. 
		Also each node $v\in \mathcal{T}$ stores the point with
		the maximum $x$-coordinate value among the points
		in the canonical subset of $v$.
		The tree $\mathcal{T}$ can be constructed in $O(n)$ time using $O(n)$ space~\cite{BCKO08}, and
		finding the points with the maximum $x$-coordinate values can be done 
		in $O(n)$ time in total. Using this data structure, we can find the point $p_k$
		in $O(\log n)$ time.

 \item 
		The corners of S$_{b}$ and S$_{t}$ are maintained in 
		two balanced binary search trees $\mathcal{T}_{b}$ and $\mathcal{T}_{t}$ 
		built on their $y$-coordinate values, respectively.
		We compute S$_{b}$ as follows, and S$_{t}$ can be computed similarly.
		Let $c$ be a point at $(-\infty, -\infty)$, and consider the region $R_c$ which is the intersection 
		between two half spaces $x\geq x(c)$ and $y\leq y(c)$.
		Now we move $c$ upward until $R_c$ becomes a rainbow set.
		Next, we move $c$ to the right while $R_c$ is a rainbow set.
		When $c$ is stopped, there should be a point $p$ on the left side of $R_c$, 
		and we move $c$ upward again until $R_c$ meets a point $q$ such that
		$\alpha(p)=\alpha(q)$. 
		Then, we move $c$ to the right again, while $R_c$ is a rainbow set.
		By repeating this process, we can get S$_{b}$ from the trace of $c$.
		\\
		\\
		As $P$ is sorted with respect to the $y$-coordinates, 
		the next point to be contained in $R_c$ (as $c$ moves upward)
		can be found in $O(1)$ time.
		When adding a new point to $R_c$,
		we spend $O(\log n)$ time to keep the points in $R_c$ sorted 
		with respect to the $x$-coordinates.
		Then we can find the next point to be removed from $R_c$ 
		(as $c$ moves to the right) in $O(\log n)$ time.
		In total, we can compute $\mathcal{T}_{b}$ that stores the corners of
		S$_{b}$ in $O(n\log n)$ time using $O(n)$ space.
		From $\mathcal{T}_{b}$ and $\mathcal{T}_{t}$, 
		rainbow range query can be answered in $O(\log n)$ time.
		
 \item Let $\mathcal{T}_{i}$ be a one dimensional range tree 
 build on the $x$-coordinate values of the points  in $Q_b$. 
Also each node $v\in \mathcal{T}_{i}$ stores the maximum 
 $x$-gap of two consecutive points in the canonical subset of $v$. 
 The structure $\mathcal{T}_{i}$ can be computed from $\mathcal{T}_{i-1}$
 by adding new point $p_i$ in $O(\log n)$ time, and the size of $\mathcal{T}_{i}$
 is bounded by $O(n)$~\cite{BCKO08}.
The maximum $x$-gap query for a given range can be answered in $O(\log n)$ time
 by comparing $O(\log n)$ values including every gap between two consecutive canonical subsets.
 
\end{itemize}


With the above data structures in hand, we have the following result.

\begin{lemma}\label{lem:cssq2}
The existence of a RBLC $\mathcal{L}$ can be determined 
(during the sweeping process) in $O(\log n)$ time
such that the horizontal part of $\mathcal{L}$ is right above the sweeping line and it's width is larger than the width of the widest RBLC found so far.
\end{lemma}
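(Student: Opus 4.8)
The plan is to show that the single per-event decision can be carried out by chaining the three queries described above, and that their combined outcome is exactly the condition for a qualifying RBLC to exist. First I would fix the geometry at the event when $H$ sweeps through $p_i$: since the horizontal arm of $\mathcal{L}$ must sit immediately above $H$, its bottom side lies on $y=y(p_i)$, so its width equals $y(p_j)-y(p_i)$ for the point $p_j\in Q_a$ that bounds the arm from above. To beat the current best width $w$, it suffices to take $p_j$ to be the lowest point of $Q_a$ whose $y$-gap from $p_i$ exceeds $w$; the boundary points query returns this $p_j$ together with the point $p_k$ of largest $x$-coordinate lying in the horizontal band, which pins the left end of the horizontal arm. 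The rainbow range query then returns the interval $[\max(x(\mathcal{I}_t),x(p_k)),x(\mathcal{I}_b)]$ inside which the vertical arm must be placed so that the induced partition of $P$ is a pair of rainbows, and the maximum $x$-gap query checks whether the points of $Q_b$ leave an empty vertical channel in this interval that is wider than $y(p_j)-y(p_i)$ and hence can host the vertical arm. If so, $\mathcal{L}$ exists; otherwise no qualifying corridor has its horizontal arm right above $H$.

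The correctness argument has two directions. For necessity, any RBLC with horizontal arm just above $H$ and width exceeding $w$ must avoid $p_i$, $p_j$, and $p_k$ in the horizontal band, must have its vertical arm contained in the rainbow range (by the defining property of the staircases S$_{b}$ and S$_{t}$), and must fit its vertical arm into an empty $x$-gap of $Q_b$; these are exactly the three tested conditions. For sufficiency I would argue a monotonicity property: widening the corridor only tightens every constraint, since the horizontal band can then capture additional points (pushing $p_k$ rightward) and the vertical arm demands a strictly larger empty $x$-gap, so if any RBLC with width $>w$ has its horizontal arm above $H$, then the one of minimal qualifying width $y(p_j)-y(p_i)$ also does. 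Testing this single minimal width is therefore enough, which is what makes the per-event cost independent of the number of admissible widths. The point requiring care is that shrinking the corridor to this minimal width preserves the rainbow-bisecting property, which follows because lowering $p_j$ and thinning the arms only moves points from the empty corridor into the two sides without emptying either rainbow, as encoded by S$_{b}$ and S$_{t}$.

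Finally I would total the running time: the boundary points query costs $O(\log n)$ using $\mathcal{T}$, the rainbow range query costs $O(\log n)$ using $\mathcal{T}_{b}$ and $\mathcal{T}_{t}$, and the maximum $x$-gap query costs $O(\log n)$ using $\mathcal{T}_{i}$, so the whole decision takes $O(\log n)$ time per event. I expect the main obstacle to be the sufficiency/monotonicity step, namely precisely verifying that the three local tests, evaluated only at the minimal qualifying width, certify the existence of a globally valid empty and rainbow-bisecting L-corridor, and in particular that the staircases S$_{b}$ and S$_{t}$ faithfully delimit the placements of the two corners that keep both sides rainbow.
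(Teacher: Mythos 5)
Your proposal is correct and follows essentially the same approach as the paper: the paper's justification for this lemma is precisely the preceding description of the three queries (boundary points, rainbow range, maximum $x$-gap) answered in $O(\log n)$ time each via the range trees $\mathcal{T}$, the staircase trees $\mathcal{T}_b$, $\mathcal{T}_t$, and the gap-augmented tree $\mathcal{T}_i$. Your added monotonicity/shrinking argument---that testing only the minimal qualifying width $y(p_j)-y(p_i)$ suffices because a wider RBLC can be shrunk (transferring only empty corridor area, hence leaving both rainbow sides' point sets unchanged)---is a correct filling-in of a step the paper leaves implicit, not a different route.
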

Our algorithm uses the result of  Lemma~\ref{lem:cssq2} 
to obtain a maximum-width RBLC from $P$.

\begin{theorem} \label{thm:cssq2}
 Given $n$ points in $\mathbb{R}^2$ 
 where each one is colored with one of the $k$ colors, a maximum-width 
 RBLC can be found in $O(n\log n)$ time using $O(n)$ space.
\end{theorem}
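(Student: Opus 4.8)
The plan is to turn the per-event test of Lemma~\ref{lem:cssq2} into a complete sweep and to report the running maximum. First I would fix one of the eight symmetric configurations --- say, the RBLC pointing down-right with its width realized by a vertical ($y$-)gap --- and observe that in an optimal such RBLC the lower boundary of the horizontal arm may be assumed to pass through some input point $p_i$; the remaining orientations and the case where the width is an $x$-gap are handled by reflecting the coordinate axes, which only multiplies the running time by a constant. Thus it suffices to examine, for each point $p_i$ serving as the bottom of the horizontal arm, the widest RBLC whose horizontal arm sits right above $y = y(p_i)$, and to take the maximum over all $i$.

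Next I would set up the sweep. After the $O(n\log n)$ sort by $y$-coordinate (already available as preprocessing), I build the static structures once: the range tree $\mathcal{T}$ on $y$-coordinates augmented with maximum-$x$ information in $O(n)$ time, and the two staircase trees $\mathcal{T}_b$, $\mathcal{T}_t$ in $O(n\log n)$ time and $O(n)$ space, as described before Lemma~\ref{lem:cssq2}. I then sweep $H$ upward through the $n$ events $p_1, \ldots, p_n$. When $H$ crosses $p_i$, the point $p_i$ migrates from $Q_a$ to $Q_b$, so I insert it into the evolving maximum-gap structure $\mathcal{T}_i$ in $O(\log n)$ time; crucially this is done in place, so only $O(n)$ space is ever used. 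Immediately afterwards I invoke the three-step query of Lemma~\ref{lem:cssq2} in $O(\log n)$ time against the current best width $w$, updating $w$ and the stored witness whenever a strictly wider RBLC is detected. Summing over the $n$ events gives $O(n\log n)$ time and $O(n)$ space, matching the sort, and the constant number of symmetric cases does not change the bound.

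The correctness argument is where I would be most careful. I would prove that the predicate tested by the query is monotone in the candidate width: as a candidate width $\delta$ grows, the upper point $p_j$ moves up, the left bounding point $p_k$ can only move rightward, the admissible $x$-range for the vertical arm (delimited by $p_k$ and the staircase intersections $\mathcal{I}_t, \mathcal{I}_b$) can only shrink, while the required $x$-gap $\delta$ grows; hence ``some $x$-gap in the range is at least $\delta$'' holds for all small widths and fails for all large ones. Writing $\delta_{\max}(p_i)$ for the largest feasible width at $p_i$, this monotonicity is exactly what lets the single threshold test decide whether $\delta_{\max}(p_i) > w$: the lowest $p_j$ with $y$-gap exceeding $w$ realizes the smallest candidate width above the threshold, so it is feasible if and only if $\delta_{\max}(p_i) > w$. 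Since $w$ never decreases and the global optimum is realized at some $p_{i^*}$, I would conclude that when the sweep reaches $p_{i^*}$ the test either returns the global optimum or it has already been recorded, so the reported value is a maximum-width RBLC.

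The main obstacle I anticipate is precisely this last step: making the threshold-based query recover the \emph{global maximum} width rather than merely certifying that the current best can be beaten. The monotonicity of feasibility in $\delta$ is the lever, but it must be established carefully, because $p_k$, the intersections $\mathcal{I}_t, \mathcal{I}_b$, and the admissible $x$-range all move simultaneously as $\delta$ varies, and one must verify that their combined effect is genuinely monotone so that no wider feasible RBLC at a given $p_i$ is overlooked once the threshold is set. The secondary technical point is maintaining $\mathcal{T}_i$ as a single evolving structure so that the incremental insertions cost $O(\log n)$ each and the whole algorithm stays within $O(n)$ space instead of storing all $n$ snapshots.
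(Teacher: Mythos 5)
There is a genuine gap, and it sits exactly where you said you were most worried: your sweep performs a \emph{single} threshold query per event, and that is not enough to recover the maximum width. When the query at $p_i$ succeeds, what it certifies (and what it hands you as a witness) is a RBLC whose width is the \emph{smallest} candidate $y$-gap exceeding the current best $w$, namely $y(p_j)-y(p_i)$ for the lowest valid $p_j$ --- not $\delta_{\max}(p_i)$. Your monotonicity argument is correct, and it does make this a valid decision procedure for ``$\delta_{\max}(p_i) > w$'', but a yes-answer only lets you raise $w$ to $y(p_j)-y(p_i)$, which may still be far below $\delta_{\max}(p_i)$. Once you then advance the sweep line, every RBLC anchored at $p_i$ is gone forever, since later events only test corridors whose horizontal arm sits above a higher point. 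Concretely: if at $p_{i^*}$ the feasible candidate widths include $y(p_j)-y(p_{i^*}) < y(p_{j'})-y(p_{i^*})$, where the latter is the global optimum and the former already exceeds the current $w$, your single test records the former, moves on, and the optimum is never examined. So the claim ``the test either returns the global optimum or it has already been recorded'' is false.

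The fix is what the paper's proof actually does: at each event, \emph{repeat} the query (each success raising $w$) until it fails, and only then advance $H$ to $p_{i+1}$. By your own monotonicity observation, a failed query certifies that no RBLC at this event is wider than the current $w$, so nothing is lost when the sweep advances. The time bound then requires an amortized counting argument rather than ``one query per event'': each successful query strictly increases the index $j$ of the top point (since $w$ becomes $y(p_j)-y(p_i)$, the next candidate must lie strictly higher), and $j$ never decreases over the whole sweep because $w$ never decreases and $i$ only increases; hence there are $O(n)$ successful queries in total plus at most one failed query per event, i.e., $O(n)$ queries overall, each costing $O(\log n)$ by Lemma~\ref{lem:cssq2}. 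That two-pointer argument is the real content of the paper's proof and is missing from yours. (Your preprocessing, the in-place maintenance of the gap structure within $O(n)$ space, and the reduction to the down-right/$y$-gap case by symmetry all match the paper and are fine.)
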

\begin{proof}
		The sweeping line $H$ starts from the position $y = y(p_1)$ and gradually moves upwards.
		When the sweeping line $H$ passes through a point $p_i$, we determine 
		the existence of a RBLC $\mathcal{L}$ such that the horizontal part of $\mathcal{L}$ is right above $H$ and its width is larger than the width of the widest RBLC found so far.
		We repeat this until there is no such RBLC, and move $H$ to $y = y(p_{i+1})$.
		The width of $\mathcal{L}$ is determined by the $y$-gap
		between two points $p_i$ and $p_j$ where $1\leq i<j\leq n$, and both indices $i$ and $j$
		do not decrease during the sweeping process.
		Therefore, the number of RBLCs we find during the sweeping process is $O(n)$,
		and we can find a maximum-width RBLC in $O(n\log n)$ time using $O(n)$ space.
\end{proof}

Here we mention that our algorithm improves the time complexity for computing a
\emph{widest empty axis parallel L corridor} for a given set of $n$ points in plane 
from $O(n^2\log n)$ (Theorem 1 of~\cite{bbm-mwesra-2021}) to $O(n\log n)$.\\

C$_{3}$ configuration: Any RBSA here is defined by two input points on 
both two opposite sides of its outer square. Assume that each of the top and bottom sides
of the outer square of a RBSA contain a point of $P$. We design an algorithm to compute a 
maximum-width RBSA with the the top and bottom sides of its outer square containing points $p_i$ and $p_j$,
respectively, from all pairs of indices $(i, j)$
with $1\leq i < j-1 < n$.
We start this case with the following observation.
\begin{observation}
For fixed points $p_{i}$ and $p_{j}$ which
define the bottom and top sides of the outer square $S$ respectively,
the potential locations of the center of $S$
lie on the line $\ell$ such that $y(\ell) = (y(p_{i})+y(p_{j}))/2$ (Ref. Figure~\ref{fig:cs_annulus}).
\end{observation}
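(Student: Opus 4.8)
The plan is to use only the defining geometry of an axis-parallel square together with the fact that $p_i$ and $p_j$ pin down the bottom and top sides of $S$. Since $S$ is axis-parallel, its bottom and top sides are horizontal. The assumption that $p_i$ lies on the bottom side forces this side onto the horizontal line $y = y(p_i)$, and likewise the top side lies on $y = y(p_j)$. Because the points are sorted in ascending order of $y$-coordinate and $i < j$, we have $y(p_i) \le y(p_j)$; as the top and bottom sides of a nondegenerate square are distinct, in fact $y(p_i) < y(p_j)$, and the side length of $S$ is thereby forced to equal $y(p_j) - y(p_i)$.

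Next I would invoke the definition of the center of $S$ as the intersection of its diagonals, which for an axis-parallel square is the point whose $y$-coordinate is the average of the $y$-coordinates of its top and bottom sides. Substituting the two fixed values shows that the center has $y$-coordinate exactly $(y(p_i)+y(p_j))/2$, independently of where $S$ is placed horizontally. Hence the center is constrained to lie on the horizontal line $\ell$ with $y(\ell) = (y(p_i)+y(p_j))/2$, which is precisely the claim.

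Finally I would remark that the horizontal position of $S$ remains a free parameter: sliding $S$ left or right keeps $p_i$ on its bottom side and $p_j$ on its top side and preserves its side length, so every point of $\ell$ is a genuine candidate for the center. There is no real obstacle in this argument; it is a direct consequence of $S$ being a square with two horizontal sides anchored by $p_i$ and $p_j$. The content of the observation is organizational rather than technical: for a fixed pair $(i,j)$ it reduces the otherwise two-dimensional search for the center of an optimal annulus in the C$_3$ configuration to a one-dimensional search along $\ell$, which is what the subsequent algorithm exploits.
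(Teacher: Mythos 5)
Your core argument is correct and is exactly the reasoning the paper leaves implicit (the paper states this as an unproven Observation): for an axis-parallel square with bottom side pinned to $y = y(p_i)$ and top side to $y = y(p_j)$, the center, being the intersection of the diagonals, has $y$-coordinate $(y(p_i)+y(p_j))/2$ regardless of horizontal placement, which is all the statement claims. One caution on your closing remark: it is \emph{not} true that every point of $\ell$ is a genuine candidate center, since the sides of $S$ are finite segments of length $y(p_j)-y(p_i)$; sliding $S$ too far horizontally makes $p_i$ or $p_j$ fall off its side, which is why the paper immediately restricts the candidate centers to the segment $[a,b] \subset \ell$ with $a=(\max(x(p_{i}), x(p_{j})) - r,\, y(\ell))$ and $b=(\min(x(p_{i}), x(p_{j})) + r,\, y(\ell))$, where $r = (y(p_{j})-y(p_{i}))/2$. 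Since the observation asserts only the necessary condition (center lies on $\ell$), this overstatement does not invalidate your proof of the statement itself, but it would mislead in the subsequent algorithmic step.
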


Note that the outer square of a potential candidate annulus for fixed points $p_{i}$
and $p_{j}$ lies inside a rectangle $R$ (Ref. Figure~\ref{fig:cs_annulus}) such that
$x$(left)$=\max\{ x(p_{i}), x(p_{j})\} - 2r$ and 
$x$(right)$=\min\{x(p_{i}), x(p_{j})\} + 2r$, where $r = (y(p_{j})-y(p_{i}))/2$.

\begin{figure}[t]
 \centering
\includegraphics[width=.4\textwidth]{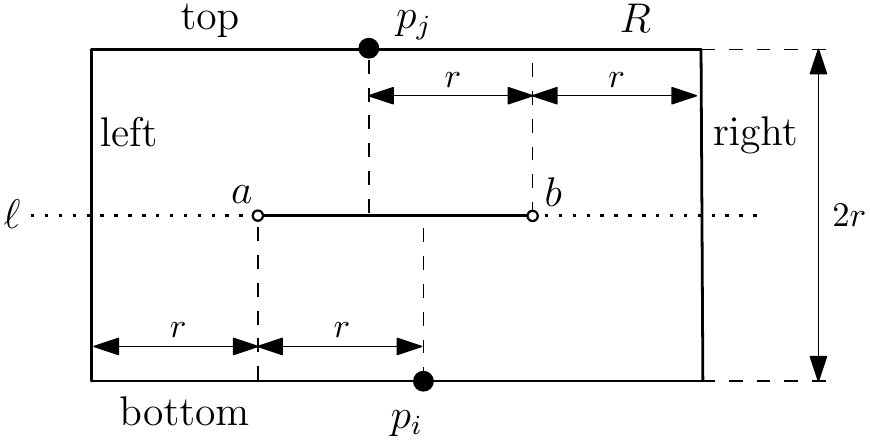}
\caption{
Centers of candidate outer squares in $R$
lie in the horizontal segment $[a,b]$ such that $a=(max(x(p_{i}), x(p_{j})) - r, y(\ell))$ 
and $b=min(x(p_{i}), x(p_{j})) + r, y(\ell))$, where $y(\ell) = (y(p_{i})+y(p_{j}))/2$ and
$r = (y(p_{j})-y(p_{i}))/2$.}
\label{fig:cs_annulus}
\end{figure}
We maintain color counter vectors to maintain the number of points of each color present outside and inside the 
horizontal strip defined by points $p_{i}$ and $p_{j}$. Since the target annulus is a RBSA, therefore rectangle 
$R$ must be a rainbow rectangle. We only consider such rectangles.

For a fixed outer square $S(c)$ with center $c$ lies on $[a,b]\subset \ell$,
we compute the radius of the inner square $S'(c)$ by finding the farthest point from center $c$ lying inside $S(c)$. 
For this, we plot $L_\infty$ distances from $c$ to each point $p$ in $R$ along $[a,b]$.
Bae et al.~\cite{bbm-mwesra-2021} showed that 
the upper envelope of the $L_\infty$ distances has $O(n)$ complexity and
can be computed in $O(n)$ time.
From the centers of the annuli obtained by mapping the breakpoints of the upper envelope,
we identify those annuli which are RBSA. Finally our algorithm outputs the one with the largest width.
Considering all choices of $p_{i}$ and $p_{j}$ and the
above discussion leads to the following result.

\begin{lemma}\label{lem:emp_sq_center2}
A maximum-width RBSA corresponding to C$_{3}$ configuration can be computed 
in $O(n^3)$ time and $O(n)$ space.
\end{lemma}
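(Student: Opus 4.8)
The plan is to enumerate all $O(n^2)$ candidate pairs $(p_i, p_j)$ that can occupy the bottom and top sides of the outer square, and for each such pair to compute in $O(n)$ time the widest valid RBSA whose outer square is pinned by that pair; taking the best over all pairs yields the claimed $O(n^3)$ running time, and reusing $O(n)$ working space across iterations gives the space bound. As a fast necessary-condition filter I would first discard any pair for which the associated rectangle $R$ is not rainbow, since the interior of every admissible outer square lies inside $R$ and so can be rainbow only if $R$ is.

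Fix a pair $(p_i, p_j)$ with $y(p_i) < y(p_j)$ and set $r = (y(p_j) - y(p_i))/2$, so that the outer square $S(c)$ has radius $r$ and its center $c$ ranges over the horizontal segment $[a,b]$ on the line $\ell$ from the preceding observation. As $c$ slides from $a$ to $b$, the top and bottom sides of $S(c)$ stay fixed at heights $y(p_j)$ and $y(p_i)$, so a point of $P$ can cross only the left or right side of $S(c)$; these crossings occur at $O(n)$ distinct $x$-positions, which I would precompute and merge into a single sorted event list. Sweeping through these events, I maintain color-count vectors for the points inside $S(c)$ and for the points outside $S(c)$, updating an integer counter of ``missing colors'' for each side in $O(1)$ per event. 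A center $c$ yields a valid bisection exactly when both missing-color counters are zero, so the valid centers form a union of subintervals of $[a,b]$ that is identified during the sweep.

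For the width, once $S(c)$ is fixed the annulus is forced to be empty by choosing the inner square $S'(c)$ so that its boundary passes through the point of $P$ interior to $S(c)$ that is farthest from $c$ in the $L_\infty$ metric; the width then equals $r$ minus this farthest distance. By the result of Bae et al.~\cite{bbm-mwesra-2021}, the upper envelope of the $L_\infty$ distances from $c$ to the points of $P$, as $c$ traverses $[a,b]$, has $O(n)$ complexity and is computable in $O(n)$ time, and since this envelope is piecewise linear its minimum over any subinterval is attained at a breakpoint or a subinterval endpoint. I would therefore restrict attention to the breakpoints lying inside the valid subintervals, evaluate $r$ minus the envelope value at each, and retain the maximum; merging the envelope breakpoints with the color-crossing events keeps the entire per-pair computation linear.

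The step I expect to be the main obstacle is the simultaneous bookkeeping of the two constraints along a single sweep: the inner radius is governed by the upper envelope, a geometric piecewise-linear object, whereas the rainbow-bisecting validity is governed by the combinatorial color-count events, and these two event sequences are interleaved in $x$. Care is needed to ensure that the envelope is minimized only over the \emph{valid} subintervals rather than over all of $[a,b]$, and that emptiness is genuinely preserved, since it is the farthest-interior-point construction that forces the annulus to contain no point of $P$ in its interior. The remaining technical wrinkle is the treatment of degeneracies in which a point lies exactly on a side of $S(c)$ or on $S'(c)$, where one must fix a consistent convention assigning each such coincidence to one side of the bisection so that both the emptiness and the two rainbow conditions are decided correctly.
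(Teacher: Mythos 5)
Your proposal follows the paper's proof essentially step for step: fix the pair $(p_i,p_j)$ on opposite sides of the outer square, restrict the center to the horizontal segment $[a,b]$ on the mid-line $\ell$, discard pairs whose rectangle $R$ is not rainbow, obtain the annulus width from the $O(n)$-complexity upper envelope of $L_\infty$ distances computed via Bae et al., and test the rainbow-bisection condition at the $O(n)$ candidate centers, yielding $O(n)$ time per pair and $O(n^3)$ time with $O(n)$ space overall. Your extra bookkeeping---sweeping color counts to extract the valid subintervals of $[a,b]$ and evaluating the envelope at their endpoints as well as at its breakpoints---is a more explicit (and slightly more careful) rendering of the paper's terser statement that the breakpoint annuli are checked for the RBSA property, not a different method.
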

\begin{proof}
For a fixed $p_{i}$ and $p_{j}$, the centers of all possible empty annuli can be computed in $O(n)$ 
time~\cite{bbm-mwesra-2021} and $O(n)$ space.  In another linear scan we can find the RBSA with the 
maximum-width. The result follows by taking all values of $i$ and $j$.
\end{proof}

We conclude the section with the following theorem.
\begin{theorem} \label{thm:cssq}
Given $n$ points in $\mathbb{R}^2$ where each one is colored with one of the $k$ colors, a maximum-width RBSA can be computed
 in $O(n^3)$ time using $O(n)$ space.
\end{theorem}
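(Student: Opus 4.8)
The plan is to assemble the main theorem directly from the structural characterization in Lemma~\ref{obs:sq_conf} together with the three case-specific algorithms already established. By Lemma~\ref{obs:sq_conf}, there is always a maximum-width RBSA whose inner and outer squares each carry a point of $P$ on their boundary and which falls into exactly one of the configurations C$_1$, C$_2$, or C$_3$. Hence it suffices to compute, for each configuration separately, the widest RBSA of that type and then return the widest among the three candidates. Correctness is then immediate: the global optimum is guaranteed by Lemma~\ref{obs:sq_conf} to be realized by at least one of the three configurations, so the maximum over the three sub-solutions is a maximum-width RBSA overall.

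First I would perform the preprocessing step of sorting $P$ by $x$- and by $y$-coordinates, which costs $O(n\log n)$ time and $O(n)$ space; every subsequent stage assumes the points in sorted order. For configuration C$_1$ I invoke Lemma~\ref{lem:cs_emp_sq}, computing a maximum-width RBES in $O(n)$ time and $O(n)$ space. For configuration C$_2$, which corresponds to the RBLC problem, I invoke Theorem~\ref{thm:cssq2}, obtaining a maximum-width RBLC in $O(n\log n)$ time and $O(n)$ space; here one must also account symmetrically for the constant number of possible orientations of the L-shaped corridor, each handled analogously within the same bound. For configuration C$_3$ I invoke Lemma~\ref{lem:emp_sq_center2}, which computes the widest RBSA with two opposite sides of the outer square pinned by input points in $O(n^3)$ time and $O(n)$ space.

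Summing the three stages, the running time is dominated by the C$_3$ case at $O(n^3)$, while C$_1$ and C$_2$ contribute only lower-order terms and the preprocessing is absorbed; the peak space usage across all stages is $O(n)$. Taking the maximum width over the three returned candidates therefore yields a maximum-width RBSA within the claimed $O(n^3)$-time, $O(n)$-space bound.

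I do not expect a genuine obstacle here, since the theorem is essentially a bookkeeping composition of the preceding results; the two points that require care are ensuring that Lemma~\ref{obs:sq_conf} truly partitions all cases—in particular that the degenerate sub-cases with sides at infinity are correctly routed into C$_1$ or C$_2$ rather than silently dropped—and ensuring that each sub-algorithm reuses the single sorted copy of $P$ so that the space bound remains $O(n)$ and does not accumulate across the three stages.
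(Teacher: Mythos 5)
Your proposal is correct and follows essentially the same route as the paper's own proof: invoke Lemma~\ref{obs:sq_conf} to reduce to the three configurations, solve C$_1$ via Lemma~\ref{lem:cs_emp_sq}, C$_2$ via Theorem~\ref{thm:cssq2}, and C$_3$ via Lemma~\ref{lem:emp_sq_center2}, then take the widest of the three candidates, with the $O(n^3)$ time and $O(n)$ space dominated by the C$_3$ case. The additional remarks on sorting, corridor orientations, and space reuse are sensible but do not change the argument.
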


\begin{proof}
	A maximum-width RBSA results from any of the three configurations 
from Lemma~\ref{obs:sq_conf}. 
For C$_{1}$ configuration, the maximum one is reported in $O(n)$
time using $O(n)$ space from Lemma~\ref{lem:cs_emp_sq} if the points are already sorted. 
For C$_{2}$ configuration, we find a maximum-width RBLC in $O(n\log n)$ time 
using $O(n)$ space from Theorem~\ref{thm:cssq2}.
For C$_{3}$ configuration, we compute a maximum-width RBSA in $O(n^3)$ time 
using $O(n)$ space from Lemma~\ref{lem:emp_sq_center2}.
 Hence the statement followed.
 \end{proof}
\section{Maximum-Width Rainbow-Bisecting Empty Rectangular Annulus} \label{sec:rec}

In this section, we discuss an algorithm to compute a maximum-width rainbow-bisecting empty rectangular annulus
amidst a given point set $P$ on $\mathbb{R}^2$. 
The following observation shows the existence of a maximum-width RBRA meeting certain conditions.
\begin{observation} \label{obs:rec_conf1}
There exists a maximum-width RBRA \ann~with outer rectangle $R_{out}$ and inner rectangle $R_{in}$ satisfying the following conditions: (1)
Each side of $R_{out}$ contains a point of $P$ or lies  at infinity; (2) Each side of $R_{in}$ contains a point of $P$.
\end{observation}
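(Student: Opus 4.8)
The plan is to mimic the structure of the proof of Lemma~\ref{obs:sq_conf}, adapting it from the square case to the rectangular case. The goal is to show that starting from \emph{any} maximum-width RBRA, we can transform it into one satisfying conditions (1) and (2) without decreasing its width. Since condition (2) is independent of condition (1), I would establish the two conditions separately. The key structural difference from the square case is that $R_\mathrm{out}$ and $R_\mathrm{in}$ need not be concentric, and each of the four widths (top-, bottom-, left-, right-width) can be adjusted independently. This extra freedom actually makes the argument cleaner than for squares: there is no single ``radius'' tying the four sides together.

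First I would handle condition (2), the inner rectangle. Suppose some side of $R_\mathrm{in}$ contains no point of $P$. Recall that the width of \ann~is the minimum of the four widths, so it suffices to ensure we never decrease any of the four widths. Slide that particular side of $R_\mathrm{in}$ outward (away from $R_\mathrm{in}$'s interior, toward $R_\mathrm{out}$), keeping the other three sides of $R_\mathrm{in}$ and all of $R_\mathrm{out}$ fixed. This shrinks $R_\mathrm{in}$, so no point of $P$ that was outside or on $R_\mathrm{in}$ enters its interior, and no point inside $R_\mathrm{in}$ leaves the inner set; hence the partition into two rainbow subsets is preserved and \ann~stays empty. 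Moving the side outward can only \emph{increase} the corresponding width and leaves the other three unchanged, so the overall width does not decrease. Continue until the side meets a point of $P$; such a point must be encountered before the side reaches $R_\mathrm{out}$, otherwise that width would have grown unboundedly, contradicting maximality of the (finite) width. Repeating independently for each of the four sides establishes condition (2).

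Next I would establish condition (1) for the outer rectangle by a symmetric argument. If a side of $R_\mathrm{out}$ neither passes through a point of $P$ nor lies at infinity, slide it outward (enlarging $R_\mathrm{out}$) with the other three sides of $R_\mathrm{out}$ and all of $R_\mathrm{in}$ held fixed. Enlarging $R_\mathrm{out}$ keeps the annulus empty and preserves the rainbow partition, and it can only increase the corresponding width. Continue until the side hits a point of $P$; if no point is ever encountered, the side recedes to infinity, which is exactly the permitted alternative in condition (1). Doing this for each of the four sides yields condition (1).

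The main subtlety I anticipate is making precise the claim that these slides preserve the \emph{rainbow-bisecting} property, not merely emptiness. Shrinking $R_\mathrm{in}$ and enlarging $R_\mathrm{out}$ both enlarge the annular region, so a point could in principle migrate from the inner set to the outer set (when a side of $R_\mathrm{in}$ sweeps past it) or from the outer set into the annulus. However, since we only slide sides that contain no point of $P$ in their path until the instant they first touch a point, no point actually crosses into the interior of the annulus during the open portion of the motion; at the terminal position the newly-met point lies on the boundary and is assigned to the appropriate set, so both subsets remain supersets of their original selves restricted to their sides and each stays a rainbow. I would also note the tie-breaking detail, analogous to the square case, that the two chosen sides in conditions may coincide with already-pinned points, and that processing the four sides independently is valid precisely because the four widths of a rectangular annulus are decoupled. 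This decoupling is the feature that lets the proof avoid the case analysis (C$_1$, C$_2$, C$_3$) required for the square annulus.
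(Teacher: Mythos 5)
Your overall strategy --- slide each point-free side of the annulus until it contains a point of $P$ (or, for the outer rectangle, recedes to infinity), and argue that the width never decreases and that the rainbow-bisecting empty property is preserved --- is exactly the paper's argument, and your treatment of the outer rectangle (condition (1)) is correct. The genuine problem is the direction of your slide for the inner rectangle. You slide a point-free side of $R_\mathrm{in}$ ``outward (away from $R_\mathrm{in}$'s interior, toward $R_\mathrm{out}$).'' Moving a side away from the interior \emph{enlarges} $R_\mathrm{in}$ (it does not ``shrink'' it, as you assert in the same sentence), and it \emph{decreases} the corresponding width, since the top-width (say) is by definition the distance between the top sides of $R_\mathrm{out}$ and $R_\mathrm{in}$. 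Worse, a side of $R_\mathrm{in}$ moving toward $R_\mathrm{out}$ sweeps only through the annulus, which is empty by hypothesis, so it can never ``meet a point of $P$'': the process as literally described stops only when the side reaches $R_\mathrm{out}$, by which time that width has collapsed to zero. Consequently your termination claim (``such a point must be encountered before the side reaches $R_\mathrm{out}$, otherwise that width would have grown unboundedly, contradicting maximality'') is false: nothing is encountered, and the width shrinks rather than grows.

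The correct move, which is what the paper does, is to push each point-free side of $R_\mathrm{in}$ \emph{inwards}, into the interior of $R_\mathrm{in}$. This increases the corresponding width while leaving the other three unchanged, and the side is guaranteed to hit a point of $P$ --- not because of any unbounded-width contradiction, but because the inner subset is a rainbow, hence non-empty, and all of its points lie in $R_\mathrm{in}$, so the sliding side reaches the extreme such point before the rectangle degenerates. With the side stopping at the first point it touches, no point of the inner subset ever enters the open annulus, so emptiness and the bisection are preserved; the ``stop at the first touched point'' observation in your final paragraph (where, tellingly, you yourself write ``shrinking $R_\mathrm{in}$'') is exactly the right justification once the direction is reversed. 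With that single correction, and with the termination argument replaced by non-emptiness of the inner rainbow set, your proof coincides with the paper's.
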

\begin{proof}
Consider any maximum-width RBRA \ann, and assume that 
the boundary of $R_{out}$ does not contain any input point.
We enlarge each side of $R_{out}$ till each of them contains a point of $P$ or they are at infinity. 
Similarly we shrink $R_{in}$ by pushing each side inwards until each side hits a point.
Note that any two adjacent sides of $R_{in}$ can share a point at its corner.
In this process the width of
\ann~is not decreased. By our construction, we create a new maximum-width RBRA from 
\ann~that satisfies the conditions.
 \end{proof}

From Observation~\ref{obs:rec_conf1}, 
	we can find a maximum-width RBRA by examining every rectangular annulus 
	that satisfying the conditions in Observation~\ref{obs:rec_conf1}.
	The outer rectangle $R_{out}$ of a rectangular annulus $A$ can be determined by at most $4$ points of $P$, 
	and the inner rectangle $R_{in}$ of $A$ becomes the minimum rectangle that contains the points $P\cap R_{out}$.
	This results in a trivial algorithm.
\begin{remark}
For a given set of $n$ points $P$, where each point in $P$ is assigned a color from given $k$ colors, 
a maximum-width RBRA can be reported in $O(n^5)$ time. 
\end{remark}
A RBRA is called RBRA \emph{with uniform width}, or simply \emph{uniform} RBRA,
when it has all four widths (top, bottom, left and right) equal.
The following observation shows that we can construct 
a maximum width RBRA with uniform width from 
any maximum width RBRA.
\begin{observation} \label{obs:rbra_conf_uniform}
For any maximum-width RBRA, 
we can construct a maximum-width RBRA with uniform width
 satisfying the following characteristic:
 each side of its outer rectangle either contains a point of $P$
 or lies at infinity, and
 at least one side of its inner rectangle contains a point of $P$.
\end{observation}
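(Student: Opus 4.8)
The plan is to start from the structured maximum-width RBRA guaranteed by Observation~\ref{obs:rec_conf1} and then repair it into a uniform one without destroying its contact with the input points. So first I would take an arbitrary maximum-width RBRA \ann\ and, invoking Observation~\ref{obs:rec_conf1}, assume without loss of generality that each side of its outer rectangle $R_{out}$ either contains a point of $P$ or lies at infinity, and each side of its inner rectangle $R_{in}$ contains a point of $P$. Let $t$, $b$, $l$, $r$ denote its top-, bottom-, left-, and right-widths, and let $\delta = \min\{t,b,l,r\}$ be its width, which equals the maximum possible width.

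The crucial decision is how to equalize the four widths to the common value $\delta$. The naive move of pushing the oversized outer sides inward would immediately break the property that each outer side meets a point, which is exactly what we want to preserve. Instead, I would keep $R_{out}$ fixed and, for every side of $R_{in}$ whose width strictly exceeds $\delta$, slide that inner side outward (toward the corresponding side of $R_{out}$) until its width becomes exactly $\delta$. Writing $R_{in}'$ for the resulting inner rectangle, every width now equals $\delta$, so $R_{in}'$ is precisely the $\delta$-offset of $R_{out}$ and the new annulus $\ann'$ is uniform. Note that $R_{in} \subseteq R_{in}' \subseteq R_{out}$, since we only enlarged the inner rectangle while leaving the outer one untouched.

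It remains to check that $\ann'$ is still a valid maximum-width RBRA and that it meets the stated conditions. Emptiness is immediate: since $R_{in} \subseteq R_{in}'$, the annular region of $\ann'$ is contained in that of \ann, which is empty. For the bisecting property, the outer subset (points outside or on $R_{out}$) is unchanged and hence still a non-empty rainbow, while the inner subset (points inside or on $R_{in}'$) only grows and thus remains a rainbow; as the annulus is empty, these two subsets still partition $P$. All four widths equal $\delta$, so $\ann'$ is uniform and its width is still the maximum value $\delta$. Finally, $R_{out}$ is unchanged, so each of its sides still contains a point or lies at infinity; and since $\delta = \min\{t,b,l,r\}$, at least one of the four widths was already equal to $\delta$, so the corresponding inner side was never moved and still carries the point it had in \ann. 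The main thing to verify carefully is that these outward slides of the inner sides never introduce a point into the (shrinking) annulus nor spoil the rainbow property of either part — which is precisely why moving the inner boundary outward, rather than the outer boundary inward, is the correct direction.
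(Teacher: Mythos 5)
Your proposal is correct and follows essentially the same route as the paper: start from the maximum-width RBRA guaranteed by Observation~\ref{obs:rec_conf1}, keep $R_{out}$ fixed, and enlarge the inner rectangle's non-minimal sides outward so that all four widths equal the minimum width $\delta$, observing that the new annulus is contained in the old one (hence still empty and rainbow-bisecting) and that the inner side realizing $\delta$ is never moved, so it retains its point of $P$. Your write-up is in fact somewhat more explicit than the paper's (which fixes the minimum at the left-width without loss of generality), but the construction and verification are identical.
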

\begin{proof}
For a maximum-width RBRA \ann, let \ann$'$ be a maximum-width RBRA that 
satisfies the conditions in Observation~\ref{obs:rec_conf1} created by transforming
\ann~according to the process in Observation~\ref{obs:rec_conf1}.
Let $R_{out}$ and $R_{in}$ be the outer and inner rectangles of \ann$'$,
and $w$ be the width of \ann$'$.
Therefore every side of $R_{out}$ contains at least one point of $P$ or lies at infinity. 
Now think that the width $w$ comes from 
the horizontal distance of left sides (left-width) of \ann$'$.
We enlarge the remaining three sides of $R_{in}$ to form another rectangle $R'_{in}$, where $R'_{in}\subseteq R_{out}$, and all the four widths (top, bottom, left and right) of the new annulus, \ann$''$, thus formed are equal. The width of \ann$''$ is equal to $w$. 
Since $R_{in} \subseteq R'_{in}$ and thus \ann$'' \subseteq$\ann$'$,
suggests \ann$''$ is also a RBRA.
See \figurename~\ref{fig:uni-ra} for an illustration.
\end{proof}
  \begin{figure}[t]
 \centering
 \includegraphics[width=.88\textwidth]{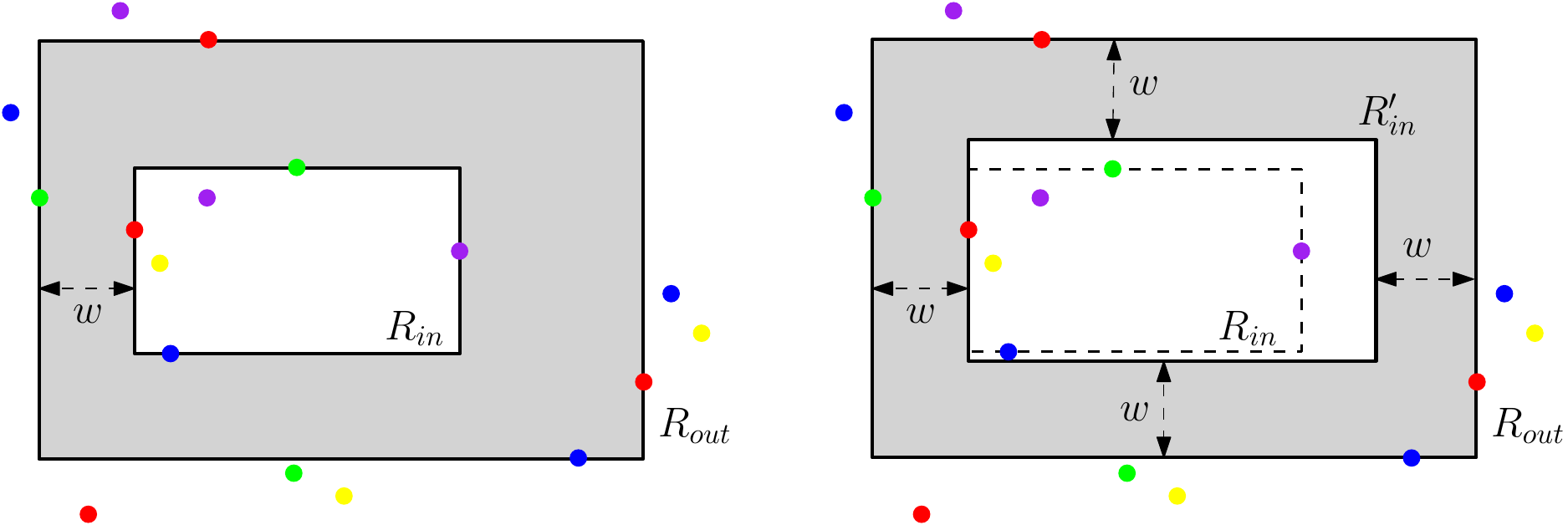}
 \caption{(left) A maximum-width RBRA of width $w$
 (right) A left-anchored maximum-width RBRA with uniform width.}
 \label{fig:uni-ra}
 \end{figure} 

As introduced in Bae et al.~\cite{bbm-mwesra-2021}, 
we call a rectangular annulus \ann~\emph{top-anchored} 
(or, bottom-anchorsed, left-anchored, right-anchored) if 
the following conditions are satisfied: 
(1) top (or, bottom, left, right, resp.) side of the outer rectangle 
 of \ann~contain a point of $P$ or lies at infinity; 
(2) top (or, bottom, left, right, resp.) side of the inner rectangle of
\ann~contains a point of $P$.
A maximum-width RBRA with uniform width has the following property.
\begin{observation} \label{obs:rec_conf}
A maximum-width RBRA with uniform width can be either 
top-anchored, bottom-anchored, left-anchored or right-anchored 
(Figure~\ref{fig:uni-ra}(right)).
\end{observation}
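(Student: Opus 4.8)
The plan is to read off the claim directly from Observation~\ref{obs:rbra_conf_uniform}, which already packages every structural fact I need. First I would apply that observation to obtain a maximum-width RBRA \ann~with uniform width for which (a) every side of its outer rectangle $R_{out}$ either contains a point of $P$ or lies at infinity, and (b) at least one side of its inner rectangle $R_{in}$ contains a point of $P$. Since the four notions of anchoring (top, bottom, left, right) are interchanged by reflections across the coordinate axes and by the $90^{\circ}$ rotation of the plane, it suffices to argue one representative case and invoke symmetry for the remaining three.

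Next I would single out the side of $R_{in}$ that property (b) guarantees to carry a point of $P$; without loss of generality assume it is the top side. This is precisely condition~(2) in the definition of a \emph{top-anchored} annulus, so that condition holds for \ann~by construction. It then remains only to verify condition~(1), that the top side of $R_{out}$ contains a point of $P$ or lies at infinity. This is a single instance of property~(a), which asserts exactly this for \emph{every} side of $R_{out}$, and in particular for the top side. Hence \ann~is top-anchored.

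Finally, I would observe that if the point-carrying side of $R_{in}$ supplied by property (b) is instead the bottom, left, or right side, then the identical pairing of conditions~(1) and~(2) shows that \ann~is bottom-anchored, left-anchored, or right-anchored, respectively. This exhausts the possibilities and establishes the statement. I do not anticipate a genuine obstacle here: because Observation~\ref{obs:rbra_conf_uniform} already delivers the stronger guarantee that \emph{all} four sides of $R_{out}$ are point-incident or lie at infinity, the anchor type is forced solely by which side of $R_{in}$ happens to hold a point, and the only remaining work is to match these two guarantees against the definition of ``anchored.''
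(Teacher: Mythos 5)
Your proposal is correct and coincides with the paper's intent: the paper states this observation without a separate proof, immediately after defining the anchored types, precisely because it follows by matching Observation~\ref{obs:rbra_conf_uniform} (every side of $R_{out}$ point-incident or at infinity, some side of $R_{in}$ point-incident) against the two conditions defining ``anchored,'' which is exactly your argument. The symmetry reduction and the case analysis over which side of $R_{in}$ carries the point are the same implicit reasoning the paper relies on.
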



In a nutshell our problem boils down to the problem of computing 
an anchored and uniform maximum-width RBRA.
Here, we only discuss the case where the RBRA is ``top-anchored''. The other three cases can be similarly handled.

A brief outline of our algorithm that computes a top-anchored and uniform RBRA 
of maximum-width is as follows:
 Consider the given set of points $P = \{p_1, p_2, \ldots, p_n\}$, sorted 
 in decreasing order of their 
 $y$-coordinates and $X$ be an array of size $k$ where $X[i]$ indicates the number 
of points of color $i$. 
Consider any 
top-anchored and uniform RBRA \ann~that satisfies 
the condition of Observation~\ref{obs:rbra_conf_uniform}.
Let $p_i \in P$ be the point lying on the top side of the outer rectangle of \ann.
From the assumption on \ann,
either the bottom side of the outer rectangle of \ann~is at infinity or
there is another point $p_j \in P$ for $i < j \leq n$ on it.
Since \ann~is top-anchored, there is a third point $p_k \in P$ on the top side of the inner rectangle of \ann.
Note that the width of \ann~is determined by the $y$-difference of $p_i$ and $p_k$, 
that is, $y(p_i) - y(p_k)$.
Thus, the maximum width for top-anchored annuli is one among $O(n^2)$ values $\{y(p_i) - y(p_k) \mid 1 \leq i \leq k \leq n\}$. \\
Initially we study the case where two points $p_i$ and $p_j$ on the top and bottom sides are fixed,
and then discuss the situation where only a point $p_i$ on the top side is fixed.
We discuss a decision algorithm when two points on the top and bottom sides of the outer rectangle are fixed, and use it as a sub-routine for solving the other case.
 
\subsection{Decision problem for fixed top and bottom sides}\label{sec:dec-rbra}
Here we discuss a decision algorithm when two points $p_i$ and $p_j$ are fixed on the top and bottom sides of outer rectangle of the RBRA, where $2\leq i+1 < j \leq n$.
To represent the case when the bottom side of the outer rectangle lies at infinity, 
we use ${j}=\infty$.
The decision problem $DP_{ij}(w)$ is defined as follows:
\begin{center}
\noindent\framebox{\begin{minipage}{5.5in}	
\emph{Given}: A positive real $w>0$, 
two indices $i$ and $j$ where  $2\leq i+1< j\leq n$, or $j=\infty$.\\
\emph{Task}: Decides the existence of a RBRA of width at least $w$ and whose outer rectangle contains $p_{i}$ and $p_{j}$ on its top and bottom sides, respectively.
\end{minipage}}
\end{center}
The following observation holds on $DP_{ij}(w)$.
\begin{observation} \label{obs:dera}
 If $DP_{ij}(w)$ is TRUE, then $DP_{ij}(w')$ is TRUE for any $w'\leq w$.
 On the other hand, if $DP_{ij}(w)$ is FALSE, then $DP_{ij}(w')$ is FALSE for any $w'\geq w$.
\end{observation}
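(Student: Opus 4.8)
The plan is to prove the statement directly from the definition of $DP_{ij}(w)$, exploiting the fact that the decision asks for a RBRA of width \emph{at least} $w$. First I would note that the two assertions are contrapositives of one another: the claim that $DP_{ij}(w)$ being TRUE forces $DP_{ij}(w')$ to be TRUE for every $w'\le w$ is logically equivalent, upon renaming the thresholds, to the claim that $DP_{ij}(w)$ being FALSE forces $DP_{ij}(w')$ to be FALSE for every $w'\ge w$. Hence it suffices to establish the first implication.

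For that implication, suppose $DP_{ij}(w)$ is TRUE, so there is a RBRA \ann~whose outer rectangle carries $p_i$ and $p_j$ on its top and bottom sides and whose width is at least $w$. Fix any $w'\le w$. The same annulus \ann~keeps the same outer rectangle (hence still carries $p_i$ and $p_j$ on the required sides), remains empty, and still bisects $P$ into two rainbow subsets; none of these properties depends on the threshold. Moreover its width is at least $w\ge w'$, so \ann~is already a valid witness for $DP_{ij}(w')$, and therefore $DP_{ij}(w')$ is TRUE.

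The only point needing attention---and it is essentially the whole substance of the argument---is to observe that, once $i$ and $j$ are fixed, the feasibility constraints (the prescribed top and bottom points, emptiness, and the rainbow-bisecting property) are independent of the width threshold; only the inequality governing the width varies with the parameter. Since the requirement that the width be at least $w'$ is weaker than the requirement that it be at least $w$ whenever $w'\le w$, the monotonicity is immediate and no genuine geometric obstacle arises. I expect the chief value of this observation to be downstream: it licenses binary or parametric search over $w$ for each fixed pair $(i,j)$, which is presumably how the decision routine $DP_{ij}(\cdot)$ is turned into an optimization step in the subsequent algorithm.
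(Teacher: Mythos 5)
Your proof is correct and matches the paper exactly: the paper states this as an observation with no written proof precisely because, as you note, the decision problem asks for a RBRA of width \emph{at least} $w$, so any witness for threshold $w$ is automatically a witness for every threshold $w' \le w$, and the FALSE direction is the contrapositive. Your spelled-out argument (including the remark that feasibility constraints for fixed $i,j$ do not depend on the threshold, and that this monotonicity is what enables the incremental search over $w$ in the optimization step) is exactly the intended justification.
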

To compute $DP_{ij}(w)$, we consider the set $P_{ij}:=\{p_{i+1}, \ldots, p_{j-1}\}$
containing points between $y(p_{i})$ and $y(p_{j})$
where $2 \leq i+1 < j\leq n$, and $P_{i\infty} := \{p_{i+1}, \ldots, p_{n}\}$. 
The decision algorithm perform the \emph{$y$-range $x$-neighbor query} operation~\cite{bbm-mwesra-2021} on points of $P_{ij}$ to evaluate 
$DP_{ij}(w)$.\\
We present Algorithm~\ref{alg:rbra-ours} to answer the decision problem $DP_{ij}(w)$.
The algorithm calls the \emph{$y$-range $x$-neighbor query} twice. Also,
if the algorithm decides that $DP_{ij}(w)$ is TRUE,
then it also returns a corresponding RBRA,
that is, a RBRA with uniform width $w$ such that
$p_i$ and $p_j$ on the top and bottom sides of its outer rectangle.

\begin{algorithm}[H]
\footnotesize
\KwIn{A positive real $w>0$, two indices $i$ and $j$ where $2\leq i+1< j\leq n$, or $j=\infty$.}
\KwOut{$DP_{ij}(w)$ and if $DP_{ij}(w)$ is TRUE then the algorithm outputs a 
	RBRA with uniform width $w$ having $p_{i}$ and $p_{j}$ lying on the top and bottom sides of its outer rectangle, respectively.}

\If {$y(p_{i}) - y(p_{j}) < 2w$}
 {Return FALSE.}
 
 Search the rightmost and the leftmost points from the sets 
 $P_{ij} \cap [-\infty, x(p_{i})] \times [y(p_{i}) - w, y(p_{i})]$ and
 $P_{ij} \cap [x(p_{i}), +\infty] \times [y(p_{i}) - w, y(p_{i})]$ respectively. Let these two points be
 $l_1$ and $r_1$ respectively.\\
 
 Search the rightmost and the leftmost points from the sets 
 $P_{ij} \cap [-\infty, x(p_{j})] \times [y(p_{j}), y(p_{j})+w]$ and
 $P_{ij} \cap [x(p_{j}), +\infty] \times [y(p_{j}), y(p_{j})+w]$ respectively. Let these two points be
 $l_2$ and $r_2$ respectively. \\

 Let $p_l$ be the rightmost one in $\{l_1, l_2\}$ and
 $p_r$ be the leftmost one in $\{r_1, r_2\}$.\\
 
\If{$\min\{x(p_{i}), x(p_{j})\} < x(p_l)< \max\{x(p_{i}), x(p_{j})\}$ or 
$\min\{x(p_{i}), x(p_{j})\}<x(p_r)<\max\{x(p_{i}), x(p_{j})\}$}
 {Return FALSE.}

Search all the horizontal gaps of length at
least $w$ from the points lying in the
range $(x(p_l), \min\{x(p_{i}), x(p_{j})\}+w)$. Let $\mathcal{L}$ denote the set of such gaps.  \tcp{Boundary points are included in the search range.}
 
Search all the horizontal gaps of length at least $w$ from the points lying in the
range $(\max\{x(p_{i}), x(p_{j})\}-w, x(p_r))$. Let $\mathcal{R}$ denote the set of such gaps.  \tcp{Boundary points are included in the search range.}

 \If {$\mathcal{L}\neq \emptyset$ and $\mathcal{R}\neq \emptyset$}
 {Start with the leftmost horizontal gap (say, $g_l:=x(l_a)-x(l_b)$) from set $\mathcal{L}$. Maintain the number of points of each color on the left side of $g_l$. From $x(l_a)$ start moving forward until a point say, $p_{in}$ comes such that the points from $x(l_a)$ to $x(p_{in})$ forms a rainbow region. Move forward from $p_{in}$ until a point say, $p_{out}$ is reached such that if moved further the region outside the region bounded by 
 $x(l_b)$, $x(p_{out})$, $y(p_{i})$ and $y(p_{j})$ is not rainbow. Search a $w$ width horizontal gap between $x(p_{in})$ and $x(p_{out})$ and check if it lies within $(\max\{x(p_{i}), x(p_{j})\}-w, x(p_r))$.
 Repeat the procedure for all the gaps in $\mathcal{L}$ until such a gap is found that defines the right sides of RBRA. Return TRUE if such gap exists, and RBRA of uniform width $w$. Otherwise Return FALSE.}
 \Else{Return FALSE.}
\caption{Decision algorithm}	
\label{alg:rbra-ours}
\end{algorithm}

Next, we prove the correctness of Algorithm~\ref{alg:rbra-ours}.
\begin{lemma}\label{lem:rb_rect1}
Algorithm~\ref{alg:rbra-ours} correctly computes $DP_{ij}(w)$ for any given $w > 0$ in $O(n)$ time.
\end{lemma}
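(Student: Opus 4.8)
The plan is to prove two things: that Algorithm~\ref{alg:rbra-ours} returns TRUE exactly when a uniform-width RBRA of width $w$ with $p_i,p_j$ on the top and bottom sides of its outer rectangle exists, and that every step runs in $O(n)$ time. I would first pin down the geometry. Since the annulus is uniform of width $w$ and anchored at $p_i,p_j$, the top side of $R_{out}$ sits at $y(p_i)$ and the bottom at $y(p_j)$, while the top and bottom sides of $R_{in}$ sit at $y(p_i)-w$ and $y(p_j)+w$; hence the only remaining freedom is the horizontal placement of the two vertical arms, equivalently the left side $x_L$ and right side $x_R$ of $R_{in}$ (the outer left/right sides are then $x_L-w$ and $x_R+w$). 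Requiring $R_{in}\subseteq R_{out}$ forces $y(p_i)-y(p_j)\ge 2w$, which is exactly the first test; I would show this rejection is sound and that, conversely, it leaves the inner rectangle non-degenerate.

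Next I would establish the emptiness conditions enforced by steps~2--5. For the top arm $[x_L-w,x_R+w]\times[y(p_i)-w,y(p_i)]$ to have empty interior, no point of $P_{ij}$ in the top strip $[y(p_i)-w,y(p_i)]$ may fall in the horizontal span of $R_{out}$; the binding constraints are the $x$-nearest such points on each side of $x(p_i)$, namely $l_1$ and $r_1$, and symmetrically $l_2,r_2$ for the bottom strip. Writing $p_l$ for the rightmost of $l_1,l_2$ and $p_r$ for the leftmost of $r_1,r_2$, I would show emptiness of both horizontal arms is equivalent to $x_L-w\ge x(p_l)$ and $x_R+w\le x(p_r)$, while anchoring forces $x_L-w\le\min\{x(p_i),x(p_j)\}$ and $x_R+w\ge\max\{x(p_i),x(p_j)\}$. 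Combining these, feasibility is impossible exactly when $x(p_l)>\min\{x(p_i),x(p_j)\}$ or $x(p_r)<\max\{x(p_i),x(p_j)\}$; since one always has $x(p_l)<\max\{x(p_i),x(p_j)\}$ and $x(p_r)>\min\{x(p_i),x(p_j)\}$, this is precisely the step~5 test and no further rejection is needed. The sets $\mathcal{L}$ and $\mathcal{R}$ then capture exactly the admissible arm positions: the left arm is an empty width-$w$ slab over the full $y$-strip iff it occupies a horizontal gap of length at least $w$ among the points in $(x(p_l),\min\{x(p_i),x(p_j)\}+w)$, and symmetrically for the right arm.

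With emptiness reduced to choosing one gap from $\mathcal{L}$ and one from $\mathcal{R}$, the remaining content is the bisection-into-two-rainbows condition handled by the loop in step~8. Here I would fix a left gap and prove the key monotonicity: as the right arm slides rightward the inside set (points captured by $R_{in}$) only gains colors, while the outside set (points left of the left arm, right of the right arm, or beyond $y(p_i),y(p_j)$) only loses colors. Consequently there is a first right-boundary $p_{in}$ making the inside a rainbow and a last right-boundary $p_{out}$ keeping the outside a rainbow, and a valid right arm exists for this left gap iff some gap of $\mathcal{R}$ of length $\ge w$ lies in $[x(p_{in}),x(p_{out})]$ intersected with the admissible right range. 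Completeness follows because any genuine RBRA induces a left-gap/right-gap pair that the loop inspects, so the loop returns TRUE whenever one exists. For the time bound I would show that advancing the left gap rightward makes the inside strictly harder and the outside strictly easier to be a rainbow, so the pointers $p_{in}$ and $p_{out}$ are monotone nondecreasing across iterations; hence the scan telescopes to $O(n)$ total rather than $O(n)$ per gap.

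Finally, for the running time I would charge each step: the two \emph{$y$-range $x$-neighbor} queries of steps~2--3 cost $O(n)$ using the structure of Bae et al.~\cite{bbm-mwesra-2021}, the gap extractions of steps~6--7 are single linear scans over the points taken in $x$-sorted order, and step~8 is the amortized linear sweep above, giving $O(n)$ overall. I expect the main obstacle to be the rigorous treatment of step~8: proving the two monotonicity claims (inside gains colors, outside loses colors) simultaneously under both arm motions, showing the pointers never backtrack so the amortization holds, and carefully dispatching the boundary and degenerate cases --- points lying exactly on an arm edge, coincident $x$-coordinates (in particular a point directly below $p_i$ or above $p_j$ within distance $w$), the corner-shared situation allowed for $R_{in}$, and the $j=\infty$ case where the bottom side lies at infinity.
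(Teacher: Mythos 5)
Your proposal follows essentially the same route as the paper's proof: emptiness is reduced to the boundary-point tests and the gap sets $\mathcal{L}$, $\mathcal{R}$, the rainbow-bisection condition is handled by the $p_{in}$/$p_{out}$ sweep per left gap, and the $O(n)$ bound comes from exactly the monotonicity the paper invokes (both pointers only move forward as the left gap advances). Your write-up is in fact more careful than the paper's terse argument --- in particular in spelling out the monotonicity claims and the completeness direction --- but there is no substantive difference in approach.
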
 
 \begin{proof}
The time complexity of Algorithm~\ref{alg:rbra-ours} depends on the search queries given in line numbers 3, 4, 8, 9 and 11. In line 3 and 4, the search queries can be done in $O(\log n)$ 
time~\cite{bbm-mwesra-2021}.
Since the points are maintained in sorted order, the operations in lines 8, 9 and 11 can be done in 
$O(n)$ time.
For any horizontal gap of length at least $w$ from $\mathcal{L}$ that defines the left sides of the RBRA we search for the corresponding right sides such that rainbow property is satisfied and right sides are in 
$\mathcal{R}$.
In line 11, we start searching for points $p_{in}$ and $p_{out}$ for the leftmost gap $g_l$ in set 
$\mathcal{L}$. At the same time we maintain a count on the number of points of each color we seen so far, starting from $x(l_a)$. At 
$x(p_{in})$ we have exactly $k$ colors present. Similarly at $x(p_{out})$ we have exactly $k$ colors 
present outside the region (including the boundaries) bounded by lines $x(l_b)$, $x(p_{out})$, $y(p_{i})$ and $y(p_{j})$ from left, right, top and bottom respectively. This operation is repeated for all gaps in 
$\mathcal{L}$ and every time it advances to the next gap, 
the points $p_{in}$ and $p_{out}$ move forward from their previous positions. 
During the repetition, if Algorithm~\ref{alg:rbra-ours} finds a RBRA with uniform width $w$, 
it returns TRUE and the RBRA. 
\end{proof}

\subsection{Optimization Algorithm}\label{sec:opt-rbra}
We now move to the case where we fix only a point $p_{i} \in P$ that defines the top side of the outer rectangle and compute a top-anchored and uniform RBRA~\ann$'$
of maximum-width for the point set $P$. 

If $p_{i}$ lies on the top side of the outer rectangle of a
top-anchored and uniform RBRA \ann,
then the width of \ann~should be in the set $W = \{y(p_{i}) - y(p_{k}) \mid i < k \leq n\}$. 
Thus, we can find \ann$'$ by solving $DP_{ij}(w)$ with Algorithm~\ref{alg:rbra-ours}
for every $i+1<j$ and $w\in W$.
There are a total of $O(n^2)$ of $DP_{ij}(w)$ to solve, 
but we can find \ann$'$ by solving only $O(n)$ of them.
\begin{lemma} \label{lem:rb_rect2}
 For a point $p_i\in P$, a top-anchored and uniform RBRA~\ann of maximum-width such that 
$p_i$ lies on the top side of the outer rectangle of \ann~can be computed in $O(n^2)$ time.
\end{lemma}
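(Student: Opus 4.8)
The plan is to exploit two facts already available. First, by Observation~\ref{obs:rbra_conf_uniform} the top side of the inner rectangle of a top-anchored uniform RBRA passes through some point $p_k$, so its width equals $y(p_i) - y(p_k)$; hence the width of the sought maximum-width annulus \ann~must lie in the set $W = \{y(p_i) - y(p_k) \mid i < k \le n\}$, which has $O(n)$ elements and which I sort increasingly as $W = (w_1 < w_2 < \cdots < w_m)$, $m \le n$, during preprocessing. Second, for each fixed bottom index $j$ (including the sentinel value $j = \infty$ for the bottom side at infinity), $DP_{ij}(\cdot)$ is monotone in its width argument by Observation~\ref{obs:dera}: TRUE up to some threshold and FALSE beyond it. Consequently, computing \ann~reduces to evaluating $\max_j \max\{\, w \in W : DP_{ij}(w) = \mathrm{TRUE}\,\}$ over the $O(n)$ admissible values of $j$.

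First I would run a two-pointer (amortized) sweep over the choices of $j$. I keep a pointer $t \in \{0,\ldots,m\}$, with the convention $w_0 := 0$, so that $w_t$ is the largest width certified feasible so far; the corresponding RBRA returned by Algorithm~\ref{alg:rbra-ours} is stored as the current candidate. For each $j$ in turn I repeatedly test $DP_{ij}(w_{t+1})$ using Algorithm~\ref{alg:rbra-ours}: while the answer is TRUE I advance $t \gets t+1$ and update the stored annulus, and I stop advancing at the first FALSE (or when $t = m$). By the monotonicity of Observation~\ref{obs:dera}, a single FALSE at $w_{t+1}$ certifies that no larger width in $W$ is feasible for this $j$, so I may move to the next $j$ without ever decreasing $t$. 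Note that the internal test $y(p_i) - y(p_j) < 2w$ of Algorithm~\ref{alg:rbra-ours} simply produces a FALSE for indices $j$ too close to $p_i$, so those are discarded uniformly, and the case $j = \infty$ is handled by the same routine.

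The query count and the correctness both follow from the monotone pointer. The pointer $t$ is non-decreasing and bounded by $m$, so the number of advancing (TRUE) tests over the whole sweep is at most $m = O(n)$, while the number of terminating (FALSE) tests is at most one per $j$, hence $O(n)$; thus the sweep performs only $O(n)$ evaluations of $DP_{ij}(\cdot)$ in total, each costing $O(n)$ time by Lemma~\ref{lem:rb_rect1}, giving the claimed $O(n^2)$ bound. For correctness, let the true optimum be $w^* = w_s$, attained at some $j^*$. Before the sweep reaches $j^*$ the pointer satisfies $t \le s$, since reaching $s+1$ would mean some earlier index certified a width $w_{s+1} > w^*$, contradicting optimality; and once at $j^*$, monotonicity makes $DP_{ij^*}(w_{t+1}),\ldots,DP_{ij^*}(w_s)$ all TRUE, so the inner loop drives $t$ up to exactly $s$ (it cannot overshoot, as $DP_{ij^*}(w_{s+1})$ is FALSE). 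No later step can push $t$ beyond $s$ either, so the final $w_t = w^*$ and the stored annulus realises it.

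The main obstacle I anticipate is not the sweep itself but pinning down the discretisation and the pointer invariant rigorously: I must argue that the optimum genuinely lies in $W$ — so that testing only the $O(n)$ values $w_1,\ldots,w_m$ loses nothing — and that a single FALSE answer per $j$ really suffices to eliminate that index. Both rest squarely on Observations~\ref{obs:rbra_conf_uniform} and~\ref{obs:dera}, and the amortized accounting must be stated carefully so that the global cap of $m$ advancing tests is not confused with the per-$j$ work. Once these invariants are fixed, the remaining details (storing and reporting the witnessing annulus, and treating $j=\infty$) are routine.
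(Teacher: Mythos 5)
Your proposal is correct and follows essentially the same approach as the paper: a monotone two-pointer sweep in which both the bottom index $j$ and the candidate width only increase, with Observation~\ref{obs:dera} justifying that one FALSE answer per $j$ suffices, so that Algorithm~\ref{alg:rbra-ours} is invoked only $O(n)$ times at $O(n)$ cost each. Your write-up is in fact somewhat more explicit than the paper's about the pointer invariant and the correctness accounting, but the algorithm and the complexity argument are the same.
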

\begin{proof}
For a fixed $p_{i}$, there are $O(n)$ candidate widths $w$ of the annulus and 
$O(n)$ candidate points $p_j$ which lie on the bottom side of the outer rectangle.
We need to find the maximum width $w^*$ such that $DP_{ij^*}(w^*)$ is TRUE 
for an index $j^*$ with $i+2< j^*$.

Let index $j=i+2$ and width $w=y(p_i) - y(p_{i+1})$ 
be the smallest possible values among the candidates, 
and solve $DP_{ij}(w)$ using Algorithm~\ref{alg:rbra-ours}.
If $DP_{ij}(w)$ is TRUE, then we set $w$ as the next smallest candidate width $w=y(p_i) - y(p_{i+2})$
and solve $DP_{ij}(w)$ again.
We repeat this process until $DP_{ij}(w)$ becomes FALSE.
From Observation~\ref{obs:dera}, $DP_{ij}(w')$ is FALSE for all $w'>w$, so 
we don't need to solve $DP_{ij}(w')$ for $w'>w$.
Thus we increase $j$ by one, and repeat the process to find 
the maximum width such that $DP_{ij}(w)$ is TRUE.
We repeat the entire process until $j=\infty$ or $w=y(p_{i}) - y(p_{n})$,
and we can find the maximum width $w'$ during the process.

The suggested process only increase both $w$ and $j$, 
	so we call Algorithm~\ref{alg:rbra-ours} at most
$O(n)$ times. Therefore it takes $O(n^2)$ time to report 
a top-anchored and uniform RBRA of maximum-width for $p_{i}$.
\end{proof} 

We have $n$ possible choices for $p_i$, so we can find 
a top-anchored and uniform RBRA of maximum-width in $O(n^3)$ time, and the
other three cases (bottom-anchored, left-anchored, or right-anchored)
of the maximum-width RBRA are handled similarly. 
We conclude the section with the following result. 
\begin{theorem} \label{thm:rect}
 Given a set $P$ of $n$ points in the plane where each one is assigned a color from given $k$ colors,
 a maximum-width RBRA with respect to $P$
 can be computed in $O(n^3)$ time and $O(n)$ space.
\end{theorem}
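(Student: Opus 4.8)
The plan is to assemble the theorem directly from the structural observations and the per-anchor optimization already established, treating it as a bookkeeping argument over four symmetric subproblems. First I would invoke Observation~\ref{obs:rbra_conf_uniform} to shrink the search space: since every maximum-width RBRA can be transformed into a maximum-width \emph{uniform} RBRA of the same width, and uniform RBRAs are themselves RBRAs, the maximum width over uniform RBRAs equals the overall maximum width. Hence it suffices to compute a maximum-width uniform RBRA. Observation~\ref{obs:rec_conf} then guarantees that such an optimum is top-, bottom-, left-, or right-anchored, so the problem splits into four cases; it is enough to solve the top-anchored case and apply the analogous routine to the other three.

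For the top-anchored case I would enumerate the point $p_i \in P$ lying on the top side of the outer rectangle. There are $n$ such choices, and Lemma~\ref{lem:rb_rect2} computes, for each fixed $p_i$, a maximum-width top-anchored uniform RBRA with $p_i$ on the top side in $O(n^2)$ time. Summing over the $n$ choices of $p_i$ yields $O(n^3)$ time for the top-anchored case. The bottom-anchored case follows by reflecting the instance about a horizontal line, and the left- and right-anchored cases by rotating the coordinate axes by ninety degrees; each is therefore solved in $O(n^3)$ time by running the same routine on the transformed point set. Reporting the widest annulus among the four optima gives a maximum-width uniform RBRA, hence a maximum-width RBRA, in $O(n^3)$ time overall.

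The part needing the most care is the $O(n)$ space bound. The only structures used inside Lemma~\ref{lem:rb_rect2} are those of Algorithm~\ref{alg:rbra-ours}: the points sorted by $y$-coordinate, the color-counter vectors (of length $k \le n$) maintained during the scans of lines 8, 9, and 11, and the structure answering the \emph{$y$-range $x$-neighbor} queries of lines 3 and 4, each of which occupies $O(n)$ space. Crucially, I would process the $n$ anchor points one at a time, and likewise the four anchoring types one at a time, reusing the same working storage and discarding each candidate annulus as soon as its width has been compared against the running best. Because no auxiliary structure is accumulated across the $O(n)$ invocations of Algorithm~\ref{alg:rbra-ours} or across the four cases, the total space stays $O(n)$ even though the total time is cubic; it is this sequential reuse of linear-size structures, rather than any single sophisticated data structure, that I expect to be the delicate point of the argument.
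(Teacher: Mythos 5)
Your proposal is correct and follows essentially the same route as the paper: reduce to uniform RBRAs via Observation~\ref{obs:rbra_conf_uniform}, split into the four anchored cases via Observation~\ref{obs:rec_conf}, enumerate the $n$ choices of the anchor point $p_i$ and invoke Lemma~\ref{lem:rb_rect2} at $O(n^2)$ time per choice, handling the remaining three anchor types symmetrically. Your explicit argument for the $O(n)$ space bound (sequential reuse of the linear-size structures of Algorithm~\ref{alg:rbra-ours} across invocations) is a detail the paper leaves implicit, but it is not a different approach.
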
 

\subsection{An Improved Approach}\label{sec:modified}

In Section~\ref{sec:dec-rbra}, we present a $O(n)$ time decision algorithm to compute a maximum-width RBRA, where the top and bottom sides of its outer rectangle pass through two fixed points. Here we show that Algorithm~\ref{alg:rbra-ours} can be implemented in $O(k^2\log n)$ time.\\
Let $a$ be any $w$-gap from the set $\mathcal{L}$ (described in Step 8 of Algorithm~\ref{alg:rbra-ours}). We are interested in the leftmost $w$-gap $b$ from the set $\mathcal{R}$ such that the corresponding inner rectangle is rainbow and the number of points outside of the annulus is maximum. Note that the inner rectangle defined by $a$ and any $w$-gap $b'$ in $\mathcal{R}$ to the right of $b$ is rainbow, while the number of points is less lying outside of the outer rectangle of the annulus. On the other hand, if we find the rightmost $w$-gap $a'$ in 
$\mathcal{L}$ such that the corresponding inner rectangle defined by $a'$ and $b$ is rainbow, then either $a=a'$ or $a'$ lies to the right of $a$. Note that in the latter case we can ignore the $w$-gap $a$, since the annulus defined by $w$-gaps ($a', b$) is better than that defined by ($a, b$) in the sense of the number of points lying outside the outer rectangle of the annulus.

\subsubsection{Minimal Rainbow Intervals}\label{sec:mri}
Consider the set $L$ of points $P_{ij} \cap [x(p_l), \min\{x(p_{i}), x(p_{j})\}+w] \times [y(p_{j}), y(p_{i})]$. Then, $\mathcal{L}$ denotes the set of $w$-gaps in $x$-coordinates of $L$. 
Similarly $R$ be the set of points 
$P_{ij}\cap[\max\{x(p_{i}), x(p_{j})\}-w, x(p_r)] \times [y(p_{j}), y(p_{i})]$. 
For our purpose we only consider the 
$x$-coordinates of the points in $P_{ij}$ and assume that the points are projected down on the $x$-axis.\\
Let $a$ be any point in $L$. We define $r(a)$ as the leftmost point in $R$ such that 
$P_{ij}\cap [a, r(a)]$ is rainbow. Similarly for any point $b$ in $R$, define $l(b)$ as the rightmost point in $L$ such that $P_{ij}\cap[l(b), b]$ is rainbow. Note that as $a$ moves to the right, 
$r(a)$ either moves to the right or stays at the same position. Similarly it holds for $l(b)$. We call an interval $[a, b]$ as \emph{minimal rainbow interval} with $a \in L$ and $b \in R$, if $b=r(a)=r(l(b))$ and 
$a=l(b)=l(r(a))$. In the following we give some properties of minimal rainbow interval.\\
Recall that $\alpha(p)$ denotes the color of a point $p$, and let $c_{[Q]}$ denotes the number of points of color $c$ in a point set $Q$.\\
\begin{itemize}
 \item[(i)] If $[a, b]$ is a minimal rainbow interval, then either $\alpha(a)_{[a, b]}= 1$ or $a$ is the rightmost point in $L$; $\alpha(b)_{[a, b]}= 1$ or $b$ is the leftmost point in $R$. In other words, $a$ is the only point of that particular color in $[a, b]$ unless $a$ is the rightmost point in $L$, and so is $b$.
 \item[(ii)] No two minimal rainbow intervals are nested, and any two minimal rainbow intervals overlap.
 \item[(iii)] There cannot be two minimal rainbow intervals $[a_1, b_1]$ and $[a_2, b_2]$ such that $a_1$ and $a_2$ are in the same color, or $b_1$ and $b_2$ are in the same color. Also from (ii) we can order all minimal rainbow intervals from left to right, $[a_1, b_1]$, $[a_2, b_2]$,\ldots, $[a_m, b_m]$ with $a_1 < a_2 <\ldots< a_m$ and $b_1 < b_2 <\ldots< b_m$. Since the colors of $a_1, a_2\ldots, a_m$ (resp. $b_1, b_2\ldots, b_m$) must be all distinct, implies that $m$ can be at most $k$. 
\end{itemize}
We are interested in the $w$-gaps that are generated in between minimal rainbow intervals.
In the following lemma we present a bound on the number of ``relevant'' $w$-gaps.
More specifically, for a given $p_i$ and $p_j$, a $w$-gap is relevant if $w \in \mathcal{L}$ or $w \in \mathcal{R}$.
\begin{lemma}\label{lem:rel-w-gaps}
The number of relevant $w$-gaps is at most $O(k)$.
\end{lemma}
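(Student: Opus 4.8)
The plan is to charge every relevant $w$-gap to a minimal rainbow interval and then invoke the bound of at most $k$ minimal rainbow intervals from property~(iii). It is convenient to identify a $w$-gap with the point of $L$ (for a gap in $\mathcal{L}$) or of $R$ (for a gap in $\mathcal{R}$) at its \emph{inner} endpoint, i.e.\ the abscissa at which the corresponding side of the inner rectangle would be placed. For a left gap write $a$ for this inner endpoint and for a right gap write $b$; a candidate annulus using these two gaps requires the inner interval $[a,b]$ to be rainbow, while the points falling outside the outer rectangle must also be rainbow.

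First I would argue that only the non-dominated gaps need to be counted, reusing the exchange argument sketched just before Lemma~\ref{lem:rel-w-gaps}. Fix a left gap with inner endpoint $a$ and let $b=r(a)$ be the leftmost right endpoint for which $[a,r(a)]$ is rainbow; this makes the inner rectangle smallest and the number of points outside largest. If the rightmost left endpoint $a'=l(b)$ keeping $[a',b]$ rainbow lies strictly to the right of $a$, then the annulus built from $a'$ dominates the one built from $a$ (at least as many points outside while the inner part stays rainbow), so $a$ may be discarded. Hence every surviving left gap satisfies $a=l(r(a))$ and $b=r(l(b))$, which is precisely the defining condition of a minimal rainbow interval $[a,b]$. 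The symmetric reasoning from the right side shows that every surviving right gap has its inner endpoint equal to the right endpoint $b$ of a minimal rainbow interval.

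Next I would convert this correspondence into a count. By property~(iii) the minimal rainbow intervals list as $[a_1,b_1],\ldots,[a_m,b_m]$ with $a_1<\cdots<a_m$, $b_1<\cdots<b_m$, and $m\le k$, the left endpoints $a_1,\ldots,a_m$ carrying pairwise distinct colors and likewise the right endpoints. Since the inner endpoint of a surviving left gap must coincide with some $a_t$, and a fixed abscissa is the inner endpoint of at most one $w$-gap (two distinct gaps have distinct endpoints), the number of relevant left gaps is at most $m\le k$; symmetrically the number of relevant right gaps is at most $m\le k$. Summing the two contributions yields at most $2k=O(k)$ relevant $w$-gaps, as claimed.

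The step I expect to be the main obstacle is making the exchange/dominance argument fully rigorous. Concretely I would have to verify that $r(\cdot)$ and $l(\cdot)$ are non-decreasing (already noted in the text), so that the maps ``leftmost feasible $b$'' and ``rightmost feasible $a$'' are well defined and compose to the minimality condition $a=l(r(a))$; and I would have to dispose of the boundary cases flagged in property~(i), where $a$ is the rightmost point of $L$ or $b$ is the leftmost point of $R$ and the single-color condition degenerates. These boundary gaps contribute only an additive $O(1)$ term, which does not affect the $O(k)$ bound.
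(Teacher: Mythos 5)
Your overall strategy---discard dominated gaps, then charge the survivors to minimal rainbow intervals and invoke property~(iii)---is the same as the paper's, but the exchange step at its core does not hold as you state it, and this is a genuine gap rather than a presentational issue. You discard a left gap with inner endpoint $a$ whenever $a'=l(r(a))$ lies strictly to the right of $a$, on the grounds that ``the annulus built from $a'$ dominates the one built from $a$.'' But $l(r(a))$ is by definition a \emph{point} of $L$, not necessarily the inner endpoint of any $w$-gap; if no empty strip of width $w$ ends at $a'$, there is no annulus built from $a'$ at all, so nothing dominates the annulus built from $a$ and the gap at $a$ cannot be discarded. Concretely, take $k=2$ and $w>2$: let $L$ consist of a red point at $x=10$ preceded by a $w$-gap, a blue point at $x=11$, and a red point at $x=12$, and let $R$ begin with a blue point at $x=20$. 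Then $r(10)=20$ and $l(20)=12$, so $[12,20]$ is the unique minimal rainbow interval; your rule discards the gap ending at $10$ because $l(r(10))=12>10$, yet that gap is the only possible left side of a width-$w$ annulus and must be kept. Hence your key claim---that every surviving left gap satisfies $a=l(r(a))$ and therefore its inner endpoint coincides with some $a_t$---is false, and the set whose cardinality you bound is not the set of gaps the algorithm must consider.

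The repair is essentially what the paper does: drop the coincidence requirement and instead charge each relevant left gap $a$ to the minimal rainbow interval $[\,l(r(a)),\,r(a)\,]$ immediately to its right (your own monotonicity computation shows $r(l(b))=b$ for $b=r(a)$, so this is indeed a minimal rainbow interval). This charging is injective among non-dominated gaps: if two left gaps $a_1<a_2$ were charged to the same interval $[\alpha,\beta]$, then $a_2\le\alpha=l(\beta)$, so the inner rectangle starting at $a_2$ is still rainbow, both gaps pair with the same candidate right gaps, and the points outside only increase---so $a_2$ genuinely dominates $a_1$, precisely because $a_2$ \emph{is} a $w$-gap. In other words, each minimal rainbow interval absorbs at most one relevant gap per side (the rightmost $w$-gap to the left of its left endpoint, and symmetrically the leftmost $w$-gap to the right of its right endpoint), giving the bound of $2k=O(k)$. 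You correctly anticipated that making the exchange argument rigorous was the main obstacle; it is exactly the place where the argument, as written, fails.
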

\begin{proof}
 Consider any $w$-gap $A$ in $\mathcal{L}$. Also assume that $A$ is between the left endpoints of minimal rainbow intervals $[a_1, b_1]$ and $[a_2, b_2]$, i.e., $A$ lies between $a_1$ and $a_2$. Therefore, the leftmost $w$-gap $b$ in $\mathcal{R}$ such that the inner rectangle defined by ($A, B$) is rainbow must lies somewhere to the right of $b_2$. If there lies another $w$-gap $A'$ between $a_1$ and $a_2$ that is to the right of $A$, we can simply ignore $A$. In a nutshell, we consider only the rightmost $w$-gap to the left of the left endpoint of each minimal rainbow interval and the leftmost $w$-gap to the right of the right endpoint of each minimal rainbow interval. Hence, the number of these ``relevant'' $w$-gaps is at most $2k$.
\end{proof}

\subsubsection{Modified Decision Algorithm}\label{sec:new-da}
Here we discuss the new decision problem $DA_{new}$ that computes $DP_{ij}(w)$.
$DA_{new}$ have exactly same lines 1 to 7 as stated in Algorithm~\ref{alg:rbra-ours}. The new algorithm then computes all possible minimal rainbow intervals in between $y(p_i)$ and $y(p_j)$ as follows. Consider the points in $L$. For each point $a$ in $L$, we compute $r(a)$ as follows. For each color $c\in k$, compute the closest point in $P_{ij}$ of color $c$ to the right of $a$ and then find the farthest point $b$ from them. If $b$ is in $R$, then $r(a)=b$; otherwise $r(a)$ is the leftmost point in $R$ by definition. Similarly, for each point $b$ in $R$, we compute $l(b)$.
\begin{figure}[t]
	\centering
	\includegraphics[width=.45\textwidth]{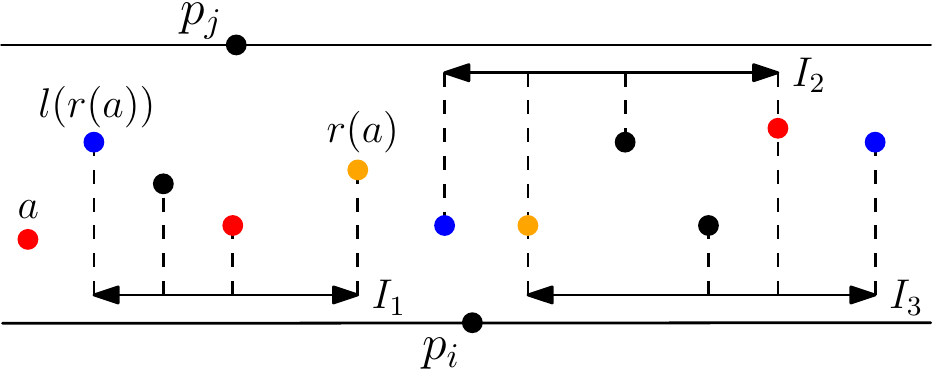}
	\caption{For the leftmost point $a$,  $r(a)$ and $l(r(a))$ define the leftmost minimal rainbow interval $I_1$. Two other minimal rainbow intervals are denoted by $I_2$ and $I_3$.}
	\label{fig:rainbowinterval}
\end{figure}

To compute all minimal rainbow intervals, first we start with the leftmost point $a\in L$, compute $b=r(a)$, and
then compute $l(b)$. Therefore, $[l(b), b]=[a_1, b_1]$ denotes the first minimal rainbow interval. 

See Figure~\ref{fig:rainbowinterval} as an example.
If $a_1$ is the rightmost point in $L$, then $[a_1, b_1]$ is the only minimal rainbow interval. Otherwise, we search for the second minimal rainbow interval $[a_2, b_2]$. Since there is no point of color 
$\alpha(a_1)$ in $[a_1, b_1]$ other than $a_1$, 
therefore $\alpha(b_2)=\alpha(a_1)$. 
So, $b_2$ can be easily found by finding the closest point of color $\alpha(a_1)$ 
in $P_{ij}$ to the right of $b_1$. Also then, $a_2=l(b_2)$. The above process continues until we reach the rightmost point in $L$.

Consider any minimal rainbow interval $[a, b]$. We store $k$ counters for $[a, b]$. 
Let $C([a, b])=\#\_1([a, b]), \#\_2([a, b]),\ldots, \#\_k([a, b])$ be the color counter vector, consisting of the number of points in $P_{ij}$ of each color. 
 For each minimal rainbow interval, we store its color counter vector.\\
For each $[a, b] \in [a_1, b_1]$, $[a_2, b_2]$,\ldots, $[a_m, b_m]$, we repeat the following steps: \\\\
$(i)$ First we compute the rightmost $w$-gap $A$ to the left of $a$, if any. Find the leftmost $w$-gap $B$ to the right of $b$, if any. \\
$(ii)$ Compute the color counter vector of the region between $A$ and $a$ and of the region between $b$ and $B$.
 Also compute the counter $C([A, B])$ for points between $A$ and $B$.\\
$(iii)$ From $C([A, B])$, the number of points of each color that lie outside the annulus can be found.\\\\
After the completion of the above steps if there exixts a RBRA of $w$-width, $DA_{new}$ reports it.
In order to perform the above operations efficiently the following data structures are being used.
\begin{itemize}
 \item Consider the 1-dimensional range tree $\mathcal{X}_{ij}(c)$, 
 where $c \in [k]$, for the $x$-coordinates
 of points in $P_{ij}$ for counting for color $c$. Also at each node $v$ of the tree, we store $size(v)$, the number of leaves descended from $v$.  The data structure $\mathcal{X}_{ij}(c)$ can be constructed
 using storage $O(|P_{ij}(c)|)$~\cite{BCKO08}, where $|P_{ij}(c)|$ represents the number of points of color $c$ in $P_{ij}$. We maintain $O(k)$ such 1D range trees.
 \item We preprocess the points of $P_{ij}$ into a 2D range tree (with fractional cascading) $\mathcal{T}_{ij}$ as follows: Let $x_1<x_2<\ldots$, be the $x$-coordinates of points in $P_{ij}$ in sorted order. Here we map each $x_i$ into a 2D point ($x_i$, $x_{i+1}-x_i$). We construct $\mathcal{T}_{ij}$ on these 2D points.
 $\mathcal{T}_{ij}$ can be constructed using storage $O(|P_{ij}|\log |P_{ij}|)$~\cite{BCKO08}.
\end{itemize} 
With the above data structures in hand we can compute all minimal rainbow intervals and perform the mentioned operations on them as follows:
\begin{itemize}
\item For any point $a \in L$ (resp. for each point $b$ in $R$), we find $r(a)$ (resp. $l(b)$) in $O(k \log n)$ time from $k$ given 1D range trees. Thus in $O(k \log n)$ time we compute a minimal rainbow interval.
\item To compute the leftmost $w$-gap to the right of a minimum rainbow interval $[a, b]$, 
more specifically the leftmost $w$-gap to the right of $b$,
we use a 3-sided range query defined by the region $R=[x(b), min(x(r_1), x(r_2))-w] \times [w, \infty]$, 
where $r_1$ is the leftmost point from the region $[x(p_i), \infty] \times [y(p_i)-w, y(p_i)]$,
and $r_2$ is the leftmost point from the region $[x(p_j), \infty] \times [y(p_j), y(p_j)+w]$. 
Here we find the leftmost point from region $R$. From $\mathcal{T}_{ij}$ we can find the leftmost 
point in $R$ in $O(\log n)$ time. Any $w$-gap corresponds to a point in $R$ and can be used to 
construct an empty annulus of width $w$.
\item  For any rectangular region $R$, the color counter vector $C(R)$ of $R$ can be computed 
in $O(k\log n)$ time from $k$ given 1D range trees. 

Similarly, the color counter vector of the region 
between $A$ and $a$ (resp. between $b$ and $B$) and $C([A, B])$ (mentioned in step $(ii)$) 
can be computed in $O(k\log n)$ time.
\item Using $C([A, B])$ we can find the number of points outside the empty annulus in $O(k)$ time.
 \end{itemize}

\begin{lemma}\label{lem:rbra-newdec}
$DA_{new}$ computes $DP_{ij}(w)$ for any given $w > 0$ in $O(k^2\log n)$ time using $O(n \log n)$ space.
\end{lemma}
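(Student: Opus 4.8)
The plan is to establish two things about $DA_{new}$: that it correctly decides $DP_{ij}(w)$, and that it runs in $O(k^2\log n)$ time using $O(n\log n)$ space. For correctness I would build directly on the reduction set up in Section~\ref{sec:modified}: with $p_i$, $p_j$ and the width $w$ fixed, a RBRA of width $w$ exists exactly when some left $w$-gap in $\mathcal{L}$ and some right $w$-gap in $\mathcal{R}$ together define an outer rectangle whose inner rectangle is a rainbow and whose outside is also a rainbow. By Lemma~\ref{lem:rel-w-gaps} it suffices to examine only the $O(k)$ relevant $w$-gaps, and by property~(iii) of minimal rainbow intervals there are at most $k$ such intervals $[a_1,b_1],\ldots,[a_m,b_m]$ with $m\le k$. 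Thus the entire decision reduces to a constant amount of testing at each of at most $k$ minimal rainbow intervals, which is precisely what $DA_{new}$ does after reproducing lines~1--7 of Algorithm~\ref{alg:rbra-ours} (these handle the emptiness of the top and bottom strips and the feasibility conditions).

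For the correctness argument proper I would proceed interval by interval. Fixing a minimal rainbow interval $[a,b]$, the fact that $P_{ij}\cap[a,b]$ is a rainbow certifies that any inner rectangle whose $x$-range contains $[a,b]$ is a rainbow. The relevant $w$-gaps $A$ (the rightmost $w$-gap lying to the left of $a$) and $B$ (the leftmost $w$-gap lying to the right of $b$) give the narrowest outer rectangle whose inner rectangle still contains $[a,b]$, hence the one leaving the largest possible number of points outside it. Using the monotonicity of $r(\cdot)$ and $l(\cdot)$ and the non-nesting property~(ii), every valid RBRA has an inner rectangle whose $x$-range contains some minimal rainbow interval $[a,b]$, and replacing its two vertical sides by the canonical pair $(A,B)$ for that interval can only enlarge the set of outside points while keeping the inner rectangle a rainbow. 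Therefore the outside of the $(A,B)$-annulus is a rainbow whenever the original one was, so testing the single pair $(A,B)$ at each minimal rainbow interval is exhaustive. Whether the outside is a rainbow is then decided from $C([A,B])$ by subtracting it from the global per-color counts, a comparison of two length-$k$ vectors.

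For the complexity I would charge every operation to one of the at most $k$ minimal rainbow intervals, assuming the supporting range trees on $P_{ij}$ are prepared. Producing one interval needs one evaluation of $r(\cdot)$ and one of $l(\cdot)$, each costing $O(k\log n)$ through the $k$ one-dimensional range trees $\mathcal{X}_{ij}(c)$; for subsequent intervals the next right endpoint is located by a single closest-same-color query in $O(\log n)$, followed by one application of $l(\cdot)$, again $O(k\log n)$. Locating the relevant gaps $A$ and $B$ costs $O(\log n)$ via the three-sided queries on the two-dimensional range tree $\mathcal{T}_{ij}$, and forming the vectors $C([A,a])$, $C([b,B])$ and $C([A,B])$ and extracting the outside counts costs $O(k\log n)+O(k)$. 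Hence each interval is processed in $O(k\log n)$ time, and summing over the $O(k)$ intervals yields $O(k^2\log n)$. The space is dominated by $\mathcal{T}_{ij}$ at $O(|P_{ij}|\log|P_{ij}|)=O(n\log n)$; the $k$ trees $\mathcal{X}_{ij}(c)$ use $O(|P_{ij}|)=O(n)$ in total, and since the intervals are processed one at a time only a constant number of length-$k$ counter vectors need be held, adding $O(k)=O(n)$. This gives the claimed $O(n\log n)$ space.

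The hard part will be making the exhaustiveness step fully rigorous, namely that restricting to one relevant left gap and one relevant right gap per minimal rainbow interval never discards a valid RBRA. This rests on two facts I would state and prove carefully: the monotone behaviour of $r(a)$ and $l(b)$ as the endpoints slide (noted in Section~\ref{sec:mri}), and the claim that swapping a chosen gap for the corresponding extreme relevant gap only enlarges the set of points outside the outer rectangle, so it can never convert a rainbow complement into a non-rainbow one. I would also have to treat the two boundary exceptions of property~(i) separately, when $a$ is the rightmost point of $L$ or $b$ is the leftmost point of $R$, since there the usual ``unique color'' guarantee fails and the endpoint is forced rather than chosen.
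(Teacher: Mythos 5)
Your proposal is correct and takes essentially the same approach as the paper: the same decomposition into the $O(k)$ minimal rainbow intervals with one canonical pair of relevant $w$-gaps $(A,B)$ per interval, the same data structures $\mathcal{X}_{ij}(c)$ and $\mathcal{T}_{ij}$, and the same accounting of $O(k)$ intervals at $O(k\log n)$ each with space dominated by $\mathcal{T}_{ij}$ at $O(n\log n)$. If anything, your exhaustiveness argument (every valid RBRA's inner rectangle contains a minimal rainbow interval, and replacing its two gaps by the canonical pair $(A,B)$ only shrinks the outer rectangle, so the outside remains a rainbow) is spelled out more rigorously than in the paper, whose proof of this lemma is essentially just the complexity accounting with correctness left implicit in the discussion of Sections~\ref{sec:mri} and~\ref{sec:new-da}.
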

\begin{proof}
$DA_{new}$ takes $O(\log n)$ time~\cite{bbm-mwesra-2021} to compute lines 3 and 4. After getting TRUE in line 7 of Algorithm~\ref{alg:rbra-ours},  $DA_{new}$ computes all minimal rainbow intervals. Since there are $O(k)$ minimal rainbow intervals in total, it takes $O(k^2 \log n)$ time.
For any minimal rainbow interval $DA_{new}$ construct a $w$-width RBRA if exists in $O(k \log n)$ time and 
$O(n \log n)$ space using the data structures mentioned above. Considering all minimal rainbow intervals the total running time is $O(k^2 \log n)$.
\end{proof}

Now we discuss how to report a maximum-width top-anchored RBRA when a point $p_i \in P$ is fixed on the top side of outer rectangle of RBRA.
Assume any point $p_i \in P$ lies on the top side of outer rectangle of RBRA.
For a fixed $p_i$, we consider the widths that lies in the set $W_i := \{y(p_i) - y(p_k) \mid i \leq k \leq n\}$ and report the maximum among them. The optimization algorithm is described in section~\ref{sec:opt-rbra}.
Having $p_i$ fixed, consider all possible values of $j = i+2, \ldots, n$ and also $j = \infty$. 
As $j$ increases, more points are included in $P_{ij}$ and therefore we update the structures by inserting a point in each of them. 
Therefore in this case we can report a maximum width top-anchored RBRA in $O(k^2 n \log n)$ time.
Considering all values of $p_i$, along with the above discussion we have the following result.

\begin{theorem}\label{theo:new-op-algo}
 For a given set of $n$ points in the plane where each point is assigned a color from given $k$ colors, a maximum-width top-anchored RBRA can be computed in $O(k^2n^2\log n)$ time and $O(n\log n)$ space. 
\end{theorem}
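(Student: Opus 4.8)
The plan is to combine the monotone two-pointer scan of Lemma~\ref{lem:rb_rect2} with the faster decision routine $DA_{new}$ of Lemma~\ref{lem:rbra-newdec}, so that the $O(n)$ factor contributed by the scan multiplies against the $O(k^2\log n)$ decision cost rather than against the $O(n)$ cost of Algorithm~\ref{alg:rbra-ours}. First I would fix the point $p_i$ that lies on the top side of the outer rectangle; there are $n$ such choices. For each fixed $p_i$, the width of any top-anchored RBRA with $p_i$ on top must come from the set $W_i = \{y(p_i) - y(p_k) \mid i \leq k \leq n\}$ of $O(n)$ candidate values, while the point $p_j$ on the bottom side ranges over $O(n)$ choices (including $j=\infty$). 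The structural input is Observation~\ref{obs:dera}, the monotonicity of $DP_{ij}(w)$ in $w$: exactly as in Lemma~\ref{lem:rb_rect2}, I would start the scan at the smallest $j$ and smallest $w$, advance $w$ to the next candidate whenever the decision returns TRUE, and advance $j$ whenever it returns FALSE. Since both indices only increase, the number of decision queries for a fixed $p_i$ is $O(n)$.

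The substantive change is the decision call itself. Instead of invoking the $O(n)$-time Algorithm~\ref{alg:rbra-ours}, each query now invokes $DA_{new}$, which by Lemma~\ref{lem:rbra-newdec} runs in $O(k^2\log n)$ time and $O(n\log n)$ space. To support this I would maintain the $k$ one-dimensional color range trees $\mathcal{X}_{ij}(c)$ and the two-dimensional gap range tree $\mathcal{T}_{ij}$ incrementally: each time $j$ advances, a single point newly enters $P_{ij}$, so it is inserted into the structures in $O(\log n)$ time. Over the whole scan for a fixed $p_i$ these insertions total $O(n\log n)$ time, which is dominated by the $O(n)\cdot O(k^2\log n) = O(k^2 n\log n)$ cost of the decision queries. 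Hence for each $p_i$ we report the maximum-width top-anchored RBRA with $p_i$ on top in $O(k^2 n\log n)$ time, and summing over all $n$ choices of $p_i$ yields the claimed $O(k^2 n^2\log n)$ bound. The space stays $O(n\log n)$, dominated by $\mathcal{T}_{ij}$.

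The step I expect to be the main obstacle is maintaining $\mathcal{T}_{ij}$ under insertions. Because $\mathcal{T}_{ij}$ is built on the derived points $(x_\ell, x_{\ell+1}-x_\ell)$, inserting a new $x$-coordinate alters the gap coordinate of its predecessor, and the fractional-cascading layer does not admit pointwise insertion directly. I would handle this by a standard dynamization—either rebuilding only the affected secondary structures within amortized $O(\log n)$ per insertion, or replacing fractional cascading by a dynamic range tree with $O(\log^2 n)$ queries and absorbing the extra logarithmic factor—while checking that the cumulative insertion cost remains $O(n\log n)$ per fixed $p_i$ and therefore never dominates. Verifying correctness of the scan is routine given Observation~\ref{obs:dera} and the correctness of $DA_{new}$; the only remaining care is to confirm that examining just the $O(n)$ queried pairs $(j,w)$ still certifies the global maximum for fixed $p_i$, which follows because the set of TRUE answers forms a monotone staircase in the $(j,w)$ grid, so the two-pointer traversal visits every cell on its upper boundary.
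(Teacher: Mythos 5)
Your proposal follows essentially the same route as the paper: fix $p_i$, run the monotone $(j,w)$ scan of Lemma~\ref{lem:rb_rect2} justified by Observation~\ref{obs:dera}, replace each decision call by $DA_{new}$ (Lemma~\ref{lem:rbra-newdec}), and update the supporting range trees incrementally as $j$ advances, yielding $O(k^2n^2\log n)$ time and $O(n\log n)$ space. You in fact go slightly beyond the paper's own proof by flagging and resolving the dynamization of $\mathcal{T}_{ij}$ under insertions (the paper only says ``we update the structures by inserting a point in each of them''), but this is a point of added care, not a different approach.
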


\section{Maximum-width Rainbow-Bisecting Empty Circular Annulus}\label{sec:cir}
In this section, we compute a maximum-width RBCA from a given point set $P$ on $\mathbb{R}^2$. Here we assume no four input points are concyclic. 
We begin with the geometric characterizations of a maximum-width RBCA. 

\begin{observation}
 The boundaries of outer circle and inner circle of a maximum-width RBCA \ann~contain at least one input 
 point from $P$.
\end{observation}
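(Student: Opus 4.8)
The plan is to mimic the argument already used for the square annulus case (Lemma~\ref{obs:sq_conf}) and the rectangular case (Observation~\ref{obs:rec_conf1}): assume the stated conclusion fails and derive a contradiction with maximality of the width by locally perturbing the annulus while keeping it rainbow-bisecting empty and not shrinking its width. The key difference from the square/rectangle settings is that a circular annulus has only two free ``radial'' parameters once the center is fixed, so the perturbations are cleaner: I can vary the inner radius and the outer radius essentially independently.

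First I would fix a maximum-width RBCA \ann~with center $c$, inner circle $C_\mathrm{in}$ of radius $\rho_\mathrm{in}$ and outer circle $C_\mathrm{out}$ of radius $\rho_\mathrm{out}$, so that its width is $\rho_\mathrm{out} - \rho_\mathrm{in}$. The emptiness and bisecting conditions say that every point of $P$ lies either in the closed disk bounded by $C_\mathrm{in}$ or outside the open disk bounded by $C_\mathrm{out}$, with both regions rainbow. Now suppose, for contradiction, that no point of $P$ lies on $C_\mathrm{in}$. Keeping $c$ fixed, I would continuously shrink $\rho_\mathrm{in}$. No point that was inside or on $C_\mathrm{in}$ can leave the inner disk under shrinking in a way that violates emptiness (points only move from ``inner'' towards ``on the annulus'' as the radius grows, not as it shrinks), so the annulus stays empty and the inner region remains rainbow as long as no point crosses out of the inner disk; shrinking never removes points from the closed inner disk until the boundary reaches the outermost inner point, so the inner region stays rainbow throughout. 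Decreasing $\rho_\mathrm{in}$ strictly increases $\rho_\mathrm{out}-\rho_\mathrm{in}$, contradicting maximality. Hence $C_\mathrm{in}$ must contain a point of $P$. The dual argument handles $C_\mathrm{out}$: if no point lies on $C_\mathrm{out}$, I would grow $\rho_\mathrm{out}$ while keeping $c$ and $\rho_\mathrm{in}$ fixed; no outer point can enter the annulus until the boundary reaches the nearest outside point, so emptiness and the rainbow property of the outer region are preserved while the width strictly increases, again a contradiction.

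The only subtlety — and the step I would treat most carefully — is ensuring the perturbation does not destroy the \emph{bisecting} property, i.e.\ that both the inner and outer regions remain nonempty rainbows during the move. For the inner-shrink, the inner region only loses points once the boundary passes the extreme point, so a sufficiently small shrink keeps the inner point set unchanged and hence still rainbow and nonempty; for the outer-grow, the outer region can only gain points, so it stays a nonempty rainbow trivially, and the empty annulus grows strictly until the first point is hit. Since the width is a strictly monotone function of the single radius being varied, any admissible perturbation that is not blocked by a point of $P$ contradicts optimality; therefore at optimality both circles are blocked, meaning each of $C_\mathrm{in}$ and $C_\mathrm{out}$ carries at least one point of $P$. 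I expect no genuine obstacle here beyond writing the monotonicity and nonemptiness bookkeeping precisely; the argument is a direct local-optimality perturbation and is strictly simpler than the square case because there are no ``sides at infinity'' to track and no corner interactions to case-split.
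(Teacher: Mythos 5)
Your proof is correct and is essentially the argument the paper intends: the paper states this observation without an explicit proof, but the identical fix-the-center, shrink-$C_\mathrm{in}$/grow-$C_\mathrm{out}$ perturbation is exactly what it spells out for the analogous square-annulus fact in Lemma~\ref{obs:sq_conf}. One small slip worth fixing: your parenthetical claim that ``points only move from `inner' towards `on the annulus' as the radius grows, not as it shrinks'' is backwards (it is shrinking that can push interior points into the annulus), but the clause immediately following it---that nothing changes until the boundary reaches the outermost interior point---is the correct justification, so the argument is unaffected.
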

We restrict our target annuli whose inner circles have more than one point. Without this restriction
our problem can be solved in $O(n\log n)$ time~\cite{dhmrs-leap-03}. Also we define a maximum-width RBCA as ``special'' if there is a point on the inner boundary that lies on the straight line joining the center of the annulus and the point lying on the outer boundary. The following lemma introduced is similar to Lemma 2.1 and Lemma 2.2 
in~\cite{dhmrs-leap-03}.
\begin{figure}[t]
	\centering
	\includegraphics[width=.41\textwidth]{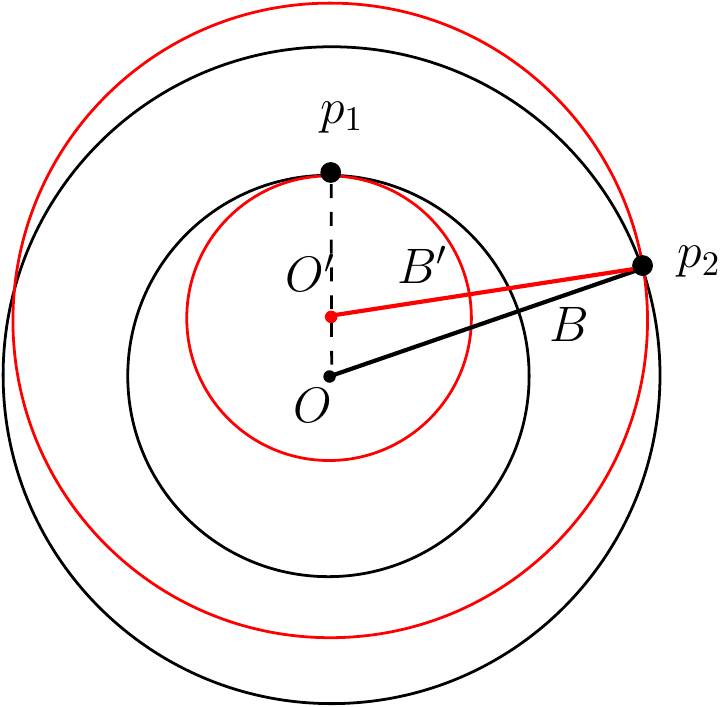}
	\caption{A RBCA \ann~(in black)that is not special. Another RBCA $\mathcal A'$ (in red) is constructed from \ann~and 
	have larger width.}
	\label{fig:circannulus1}
\end{figure}

\begin{lemma}\label{lem:un_cir_annulus}
    A maximum-width RBCA \ann~can have one of the following potential configurations - $(i)$ Both the boundaries of $C_\mathrm{out}$ and $C_\mathrm{in}$ contain at least two input points of $P$ 
	when \ann~is not a special annulus and 
	$(ii)$ The boundary of $C_\mathrm{out}$ contain at least one input point of $P$ and the boundary of 
	$C_\mathrm{in}$ contain at least two input points of $P$ when \ann~is a special annulus.
\end{lemma}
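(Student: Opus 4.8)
The plan is to derive both cases of the lemma from a single first-order perturbation argument, after reducing the statement to one clean implication. By the preceding observation the outer and inner circles each carry at least one input point, so what must really be established is that a single contact on a circle forces radial alignment. Concretely, I would prove the contrapositives of two symmetric statements: \textbf{(a)} if the outer circle of a maximum-width RBCA carries exactly one point of $P$, then the annulus is special, and \textbf{(b)} if the inner circle carries exactly one point, then it is special. Statements (a) and (b) together say that a \emph{non-special} optimum has at least two points on each circle, which is exactly case (i); in the \emph{special} case the outer boundary may legitimately drop to a single point, and the inner count of two is recovered by a sliding normalization, giving case (ii).

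The perturbation framework I would use freezes the partition of $P$ into the set $P_\mathrm{in}$ of points in the closed inner disk and the set $P_\mathrm{out}$ of points outside the open outer disk. For a center $c'$ near the current center $c$ set $r_\mathrm{in}(c')=\max_{p\in P_\mathrm{in}}|c'-p|$ and $r_\mathrm{out}(c')=\min_{p\in P_\mathrm{out}}|c'-p|$. For $c'$ close to $c$ the partition is unchanged, so the annulus stays empty and both regions remain rainbow, and since $P_\mathrm{in}$ is fixed with at least two points the standing restriction is automatically preserved; its width is $W(c')=r_\mathrm{out}(c')-r_\mathrm{in}(c')$. Writing $\hat u_\mathrm{out}^{(j)}=(c-p_\mathrm{out}^{(j)})/r_\mathrm{out}$ and $\hat u_\mathrm{in}^{(i)}=(c-p_\mathrm{in}^{(i)})/r_\mathrm{in}$ for the points realizing the min and the max, the directional derivative of $W$ at $c$ in a direction $\vec v$ is $\min_j \hat u_\mathrm{out}^{(j)}\cdot\vec v-\max_i \hat u_\mathrm{in}^{(i)}\cdot\vec v$. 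Maximality of $\ann$ forbids an ascent direction, i.e. this quantity is $\le 0$ for every $\vec v$; by a support-function (separating-hyperplane) argument this is equivalent to the stationarity condition that the convex hulls $\mathrm{conv}\{\hat u_\mathrm{out}^{(j)}\}$ and $\mathrm{conv}\{\hat u_\mathrm{in}^{(i)}\}$ intersect, that is, $\sum_i \lambda_i\,\hat u_\mathrm{in}^{(i)}=\sum_j \kappa_j\,\hat u_\mathrm{out}^{(j)}$ for some convex weights $\lambda_i,\kappa_j\ge 0$ with $\sum_i\lambda_i=\sum_j\kappa_j=1$.

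The punchline is the equality case of the triangle inequality. If the outer circle carries a single point, the right-hand side is the unit vector $\hat u_\mathrm{out}$, so the left-hand side is a convex combination of unit vectors equal to a unit vector; taking norms forces every active inner direction with positive weight to equal $\hat u_\mathrm{out}$. Geometrically, such an inner point lies on the ray from $c$ through $p_\mathrm{out}$, i.e. $\ann$ is special, which proves (a); interchanging the roles of the two circles proves (b). Hence a non-special optimum has at least two points on each circle, establishing case (i) with no appeal to the restriction. For the special case (ii) the outer boundary may have only one point, and it remains to secure a second point on the inner circle: the alignment creates a flat direction, namely sliding the center along the common radial line, which leaves $W$ unchanged, and since the inner disk is non-degenerate I would slide until a second inner point reaches the inner circle. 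The only situation in which this fails is when the inner disk is a single point, which is precisely the degenerate regime excluded by the restriction and solvable separately in $O(n\log n)$ time~\cite{dhmrs-leap-03}.

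The step I expect to be the main obstacle is the special-case sliding normalization rather than the first-order algebra, which is routine once the unit-vector/support-function viewpoint is in place. Making the slide terminate with a genuine second inner contact requires checking that, as the center travels along the alignment line, no point of $P$ enters the open annulus and neither of the two rainbow regions loses a color before the second inner point reaches the boundary; controlling this bookkeeping, together with the non-degeneracy assumption that no four input points are concyclic (which keeps the active contact sets small), is exactly where the analogy with Lemmas~2.1 and~2.2 of~\cite{dhmrs-leap-03} carries the argument through.
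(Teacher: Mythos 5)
Your proof is correct, and it rests on the same two ideas as the paper's own proof: a perturbation of the center whose punchline is the equality case of the triangle inequality (for case (i)), and a width-preserving slide along the radial alignment, terminated by a second point of the inner disk, for case (ii) (using the standing restriction that the inner disk holds more than one point). The difference is in how the perturbation is organized, and the comparison is instructive. The paper perturbs in a single explicit direction --- along the line through the center and the unique inner contact $p_1$, keeping $p_1$ and an outer contact $p_2$ on their respective circles --- and shows via a chain of triangle inequalities that the width does not decrease, strictly increasing unless center, $p_1$, $p_2$ are collinear; the symmetric case (single point on the outer circle) is declared ``similar,'' and case (ii) reuses the same construction in its width-preserving collinear regime. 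You instead derive the full first-order optimality condition at a maximum (freezing the partition, the directional derivative of the width is $\min_j \hat u_\mathrm{out}^{(j)}\cdot\vec v-\max_i \hat u_\mathrm{in}^{(i)}\cdot\vec v$, so maximality forces the convex hulls of the contact directions to intersect, via a separating hyperplane) and read off collinearity from the norm-one equality case. Your formulation buys symmetry --- one argument covers a single contact on either circle --- and makes the flat direction exploited in case (ii) intrinsic to the stationarity condition rather than an ad hoc observation; the paper's version is shorter and more elementary, using no convexity machinery. Finally, the bookkeeping you flagged as the main obstacle and left to ``analogy'' does close cleanly, provided you slide the center \emph{toward} the aligned contacts: then both the inner and outer disks shrink into themselves (each new disk is nested in its predecessor), so no point of the outer region can ever enter the open annulus, both regions retain exactly their point sets (hence remain rainbow) up to the first event, and that first event is forced to be a point of the at-least-two-point inner disk reaching the inner circle, which is exactly the second inner contact you need.
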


\begin{proof}
For $(i)$, consider \ann~ be any maximum-width RBCA which is not special and contains a single input 
point, say $p_1$ 
on the boundary of the inner circle $C_\mathrm{in}$ and the boundary of  the outer circle $C_\mathrm{out}$ contains at least one point 
of $P$, say $p_2$ (Ref. Figure~\ref{fig:circannulus1}). If we move the center (at $O$) of \ann~along the line joining the center
and $p_1$ as well as maintaining the rainbow  property on both sides of the annulus we can construct another RBCA $\mathcal A'$ (red in color in Figure~\ref{fig:circannulus1}, center at $O'$) whose boundaries still contain $p_1$ and $p_2$ . Note $Op_2$ intersects the initial inner circle (i.e., black in color in Figure~\ref{fig:circannulus1} ) at $B$ and similarly $O'p_2$ intersects the transformed inner circle (i.e., red in color in Figure~\ref{fig:circannulus1}) at $B'$.

Let us denote the width of the initial annulus \ann~and the transformed annulus $\mathcal A'$ by $w_{bl}(\mathcal A)$ and $w_{red}(\mathcal A')$ respectively, where $w_{bl}(\mathcal A)=Op_2-OB$ and $w_{red}(\mathcal A')=O'p_2-O'B'$. Then the difference between the width of the transformed and the  initial annulus i.e $w_{red}(\mathcal A')-w_{bl}(\mathcal A)=(O'p_2-O'B')-(Op_2-OB)$. Applying triangle inequality, we get  $(O'p_2-O'B')-(Op_2-OB)\geq Op_2-O'O-O'B'-Op_2+OB\geq(OB-O'O)-O'B'\geq O'B-O'B'\geq 0$. This proves that the width of the transformed annulus $\mathcal A'$  is greater than the initial annulus \ann~ implying that  \ann~ is not an optimum annulus. The other case where the boundary of inner circle of \ann~contain at least 2 input points 
of $P$ and the boundary of its outer circle contains a single input point can be handled similarly.\\

Next, we will prove the validity of case $(ii)$ of Lemma \ref{lem:un_cir_annulus}. Let $\mathcal A$ be a maximum-width RBCA that is special and $C_{in}$ of $\mathcal A$ contains two input points of $P$. If both the points lies on the boundary of $C_{in}$ of $\mathcal A$, then it is trivially proved. If this is not the case, we can construct another RBCA $\mathcal A'$ from $\mathcal A$ like previous construction such that both $\mathcal A$ and $\mathcal A'$ have equal widths and there are two points lying on the inner boundary of $\mathcal A'$.
\end{proof}
We further denote any maximum-width RBCA having configuration $(i)$ and$(ii)$ of Lemma~\ref{lem:un_cir_annulus} as 
$Cir_\mathrm{22}$ and $Cir_\mathrm{21}$ respectively. \\
%
%
%
%
We adopt the technique described in D\`{i}az-B\'{a}\~{n}ez et al.~\cite{bls-laop-06} where they solved a problem known in the literature as \emph{obnoxious plane problem} which is defined as follows: 

Given a set $S$
containing $n$ points in $\mathbb{R}^3$, find a plane $\xi$ that intersects the convex hull of $S$ and maximizes the minimum Euclidean distance to the points. In $\mathbb{R}^2$ this problem reduces to computing
of the \emph{widest empty corridor}~\cite{mh88} through a set of points. Finally they map a corridor from 
$\mathbb{R}^2$
to a slab in $\mathbb{R}^3$. A \emph{slab} is defined as the open region of $\mathbb{R}^3$ through $S$ and is bounded by two parallel planes that intersect the convex hull of $S$ and the \emph{width} of the slab is the distance between the bounding planes. They solved the \emph{widest empty slab} problem to solve obnoxious plane problem. They have shown that the candidate planes belongs to the following categories - $(i)$ $C_\mathrm{11}$ - Each of the bounding planes of the slab contain an input point. $(ii)$ $C_\mathrm{31}$ - One bounding plane contains 3 input points and other contain a single input 
point $(iii)$ $C_\mathrm{22}$ - Each bounding plane contain 2 input points and $(iv)$ $C_\mathrm{21}$ - One bounding plane contains 2 input points and other contain a single input. \\
To apply the above method we transform the points of set $P$ from $\mathbb{R}^2$ to $\mathbb{R}^3$ using paraboloid lifting as follows:
The points of the input set $P$ are transformed from $\mathbb{R}^2$ to points in $\mathbb{R}^3$ using paraboloid transformation. Let $P'= \{(p_x, p_y, p_x^2 + p_y^2): (p_x, p_y) \in P\}$ be the set of points in 
$\mathbb{R}^3$.
Each point in $\mathbb{R}^3$ is obtained from the vertical projection of the corresponding point in 
$\mathbb{R}^2$ on the parabola $z = x^2 + y^2$. Under this transformation a circle of radius $r_1$ and center at $(c_x, c_y)$ in 
$\mathbb{R}^2$ becomes a plane $z= 2c_x.x + 2c_y.y - (c_x^2 + c_y^2 - r_1^2)$ in $\mathbb{R}^3$ and any two concentric circles are transformed to two parallel planes (Ref. Figure~\ref{fig:circannulus}). 

Consider two concentric circles of an empty annulus having radius $r_1$ and $r_2$, where $r_2$>$r_1$. The corresponding two
parallel planes in $\mathbb{R}^3$ are $z_1= 2c_x.x + 2c_y.y - (c_x^2 + c_y^2 - r_1^2)$ and
$z_2= 2c_x.x + 2c_y.y - (c_x^2 + c_y^2 - r_2^2)$ respectively. Now consider the dual representation of the points in $P'$. In this representation, a point $p\in P$ lies on (resp. inside, outside) a circle $ci$ in 
$\mathbb{R}^2$ if and only if the dual hyperplane $ci^*$
contains (respectively passes above, below) the dual point $p^*$. This implies that the maximum-width RBCA 
problem in $\mathbb{R}^2$ reduces to widest empty slab problem (where each plane is colored) 
in $\mathbb{R}^3$. In this reduction a candidate slab of $C_\mathrm{22}$ category represents a 
$Cir_\mathrm{22}$ annulus and a candidate slab of $C_\mathrm{21}$ category represents a 
$Cir_\mathrm{21}$ annulus (Figure~\ref{fig:cand_planes}). Finally, the optimal solution in our problem is obtained by tracking the slabs 
of $C_\mathrm{22}$ and $C_\mathrm{21}$ categories which span $k$ colors on both sides of it.

%
%

\begin{figure}[t]
	\centering
	\includegraphics[width=.6\textwidth]{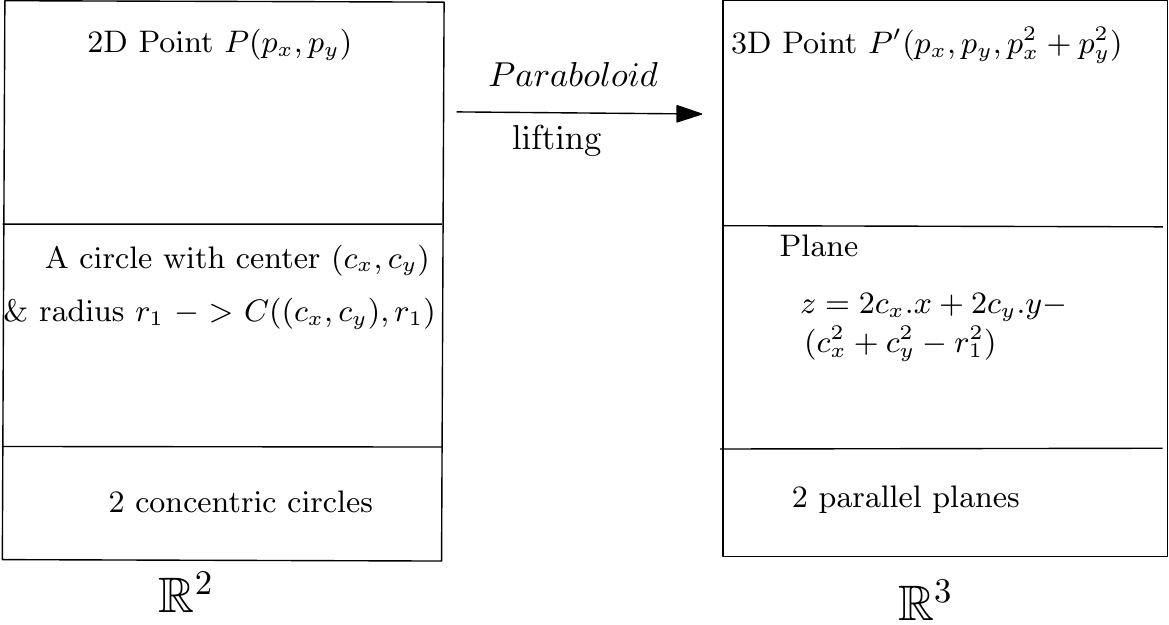}
	\caption{Paraboloid transformation of the input points. Left side maps right side.}
	\label{fig:circannulus}
\end{figure}
%

The approach of solving widest empty slab problem is primarily based on topological sweep over the arrangements of planes corresponding to the dual representation of the points in $P'$. 
Let $\mathcal{B}(H)$ be the arrangement of dual planes corresponding to points in $P'$, where each plane has a color corresponding to the point in primal. Using topological sweep on $\mathcal{B}(H)$ all candidate empty slabs of type $C_\mathrm{21}$ and $C_\mathrm{22}$ are generated. 
The topological sweep algorithm in $\mathbb{R}^3$ is described in~\cite{apg-tsitd-90} which generalizes the method for sweeping arrangements of lines in $\mathbb{R}^2$~\cite{eg-tsaa-89} to $\mathbb{R}^3$.
The data structure used to keep information about the cells intersected by the sweeping surface at any given time is illustrated in~\cite{eg-tsaa-89}.
In addition to that we maintain a count on the number of planes of each color $\in[k]$ present above and below for each edge of the current planar cut of each plane
during the sweep. For each plane $k'\in \mathcal{B}(H)$, we consider three arrays $UP_{k'}[1:k]$, 
$LOW_{k'}[1:k]$ and $CO_{k'}[1:n-1]$ as additional local data structure. Once the initial planar cut for plane $k'$ is determined, we initialize the above arrays
following sequence of edges in $N_{k'}[1:n-1]$, where $N_{k'}[1:n-1]$ 
is a list of pairs of indices indicating the lines delimiting
each edge of the current planar cut for plane $k'$.
The array $CO_{k'}[i]= (a, b)$, indicates the number of colored planes above and below each edge of $N_{k'}[i]$ of current planar cut for $k'$. We use $UP_{k'}$ and $LOW_{k'}$ to initialize $CO_{k'}$.
Consider initial cut of each plane $k'$, for $N_{k'}[1]$, we have $UP_{k'}[j]=1$, where $j$ is the color of the edge in $N_{k'}[1]$, $LOW_{k'}[j]=1$ for all $j = \{1, 2, \dots, k\}$ and $CO_{k'}[1]= (1, k)$. We find each $CO_{k'}$ in $O(n)$ time. After the sweep performs an elementary step, the incoming edges are swapped at the corresponding vertex in each planar cut and  
$CO_{k'}$ is updated following new $N_{k'}$. This operation requires $O(1)$ time.

\begin{figure}[t]
	\centering
	\includegraphics[width=.6\textwidth]{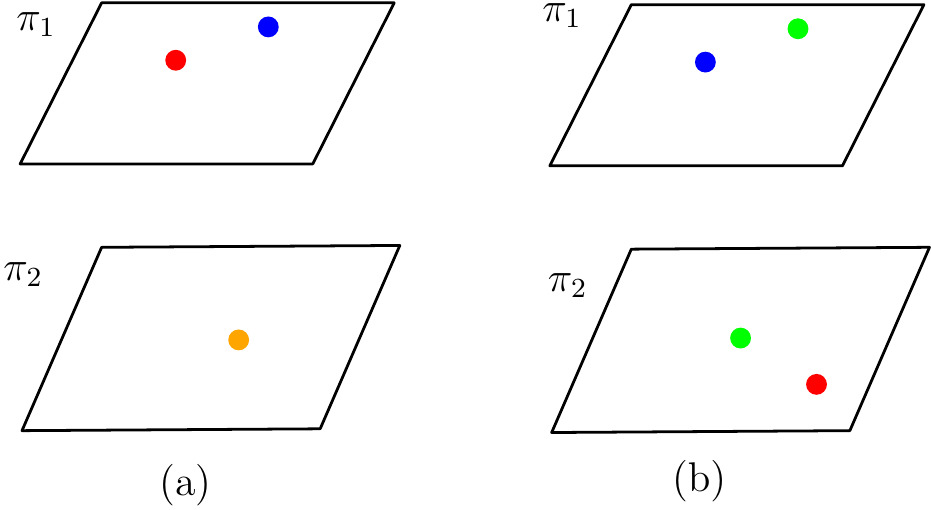}
	\caption{(a) A $C_\mathrm{21}$ empty slab defined by 2 parallel planes $\pi_{1}$ and $\pi_{2}$.
		(b) A $C_\mathrm{22}$ empty slab.}
	\label{fig:cand_planes}
\end{figure}

The above discussion along with lemma $4$~\cite{bls-laop-06} leads to the following result.
\begin{theorem}\label{thm:un_cir_annu1}
	Given a set of $n$ points in $\mathbb{R}^2$,  each one is colored with one of the $k$ colors, the maximum-width RBCA problem
	can be solved in $O(n^3)$ time and $O(n^2)$ space.
\end{theorem}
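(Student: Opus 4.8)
The plan is to reduce the maximum-width RBCA problem to a colored version of the widest empty slab problem in $\mathbb{R}^3$ and then to augment the topological-sweep algorithm of Díaz-Báñez et al.~\cite{bls-laop-06} with color bookkeeping. First I would invoke Lemma~\ref{lem:un_cir_annulus} to restrict attention to the two rigid configurations $Cir_\mathrm{22}$ and $Cir_\mathrm{21}$, so that it suffices to enumerate candidates in which the inner circle passes through at least two points and the outer circle through at least one (resp.\ two). Applying the paraboloid lift $(p_x,p_y)\mapsto(p_x,p_y,p_x^2+p_y^2)$, each circle becomes a plane, concentric circles become parallel planes, and the inside/outside relation is preserved in the dual arrangement. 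Consequently $Cir_\mathrm{22}$ and $Cir_\mathrm{21}$ annuli correspond exactly to the empty slabs of type $C_\mathrm{22}$ and $C_\mathrm{21}$; the combinatorial candidate set is therefore identical to the one produced for the uncolored widest empty slab, and for each generated candidate I would recover the concrete annulus and compute its true width.

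Next I would run the topological sweep of the arrangement $\mathcal{B}(H)$ of the $n$ dual planes, following~\cite{apg-tsitd-90,eg-tsaa-89}, which visits all $O(n^3)$ combinatorial features of the arrangement in $O(n^3)$ time using $O(n^2)$ working space and, as in~\cite{bls-laop-06}, generates every candidate slab of types $C_\mathrm{22}$ and $C_\mathrm{21}$ during the sweep. The new ingredient is to decide, for each candidate, whether both sides span all $k$ colors, i.e.\ whether the corresponding annulus is rainbow-bisecting. To this end I would carry, for every plane $k'$, the arrays $UP_{k'}[1:k]$, $LOW_{k'}[1:k]$, and $CO_{k'}[1:n-1]$ that record along the current planar cut of $k'$ the number of planes of each color lying above and below each edge, together with the derived pair $(a,b)$ counting the \emph{distinct} colors present on each side. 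Initialising these arrays for one plane costs $O(n)$ time, so over the $n$ planes the initialisation contributes only $O(n^2)$; the arrays themselves fit in $O(nk)\subseteq O(n^2)$ space, consistent with the sweep's working space. At each vertex processed by the sweep exactly two adjacent edges of one planar cut are swapped, moving a single plane from the above side to the below side of the affected edge, so I would decrement that plane's color count on one side and increment it on the other, refreshing $(a,b)$ only when a color count crosses zero; a candidate is rainbow-bisecting precisely when $a=k$ and $b=k$, checked in $O(1)$ time. Keeping the widest such slab and mapping it back yields the optimal RBCA, and summing $O(1)$ over the $O(n^3)$ elementary steps gives the claimed $O(n^3)$ time and $O(n^2)$ space.

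The main obstacle I anticipate is exactly this $O(1)$-per-step maintenance of the distinct-color totals. The topological sweep guarantees only $O(1)$ combinatorial change per elementary step for the underlying arrangement, so I must verify that each step alters the above/below status of at most one plane relative to the edge carrying a candidate slab, and that the distinct-color counts $(a,b)$ can be refreshed incrementally purely by watching for per-color counters entering or leaving zero. If a single elementary step could reshuffle more than a constant number of above/below relationships, the color update—rather than the sweep—would dominate the running time. Thus the efficiency of the whole algorithm hinges on combining this locality property of the sweep with the rigidity of candidates guaranteed by Lemma~\ref{lem:un_cir_annulus}, so that each optimal configuration is encountered as an $O(1)$-cost event during the sweep.
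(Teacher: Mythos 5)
Your proposal is correct and follows essentially the same route as the paper: restricting to the $Cir_\mathrm{22}$ and $Cir_\mathrm{21}$ configurations via Lemma~\ref{lem:un_cir_annulus}, lifting to the dual plane arrangement via the paraboloid map, reducing to a colored widest empty slab problem, and augmenting the topological sweep of~\cite{bls-laop-06,apg-tsitd-90,eg-tsaa-89} with per-plane color-count arrays ($UP_{k'}$, $LOW_{k'}$, $CO_{k'}$) that are initialized in $O(n)$ per plane and updated in $O(1)$ per elementary step. Even the locality concern you flag in your last paragraph is resolved exactly as you suggest—each elementary step swaps two adjacent edges in the affected planar cuts, moving a single plane across, so the paper likewise charges $O(1)$ per step for the color bookkeeping.
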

%
%
We extend the above idea to compute a maximum-width $RBCA$ whose center lies on a given line $L$ in the plane.
We conclude this section with the following lemma.
\begin{lemma}\label{lem:cir_annu}
	A maximum-width RBCA \ann~whose center lies on a given query line $L$ in $\mathbb{R}^2$ can be computed in 
	$O(n^2)$ time and space.
\end{lemma}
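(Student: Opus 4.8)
The plan is to exploit the fact that constraining the center to the line $L$ removes one degree of freedom, so that the two-parameter family of centers underlying Theorem~\ref{thm:un_cir_annu1} collapses to a one-parameter family. I parametrize $L$ by a single real parameter $t$, writing the center as the affine map $c(t)=c_0+t\,\mathbf{d}$ with $\mathbf{d}$ a unit direction of $L$, and for each input point $p_i\in P$ I define the distance function $f_i(t):=\|p_i-c(t)\|$. Since $c(t)$ is affine in $t$, each squared distance satisfies $f_i(t)^2=(t-b_i)^2+h_i$, where $b_i$ is the parameter of the foot of the perpendicular from $p_i$ to $L$ and $h_i$ is the squared distance from $p_i$ to $L$. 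Consequently, for any $i\ne j$ the equation $f_i(t)=f_j(t)$ is linear in $t$ (the quadratic terms cancel), so the two graphs cross at most once; the family $\{f_1,\dots,f_n\}$ therefore behaves as a system of pseudolines whose arrangement $\mathcal{A}$ has $O(n^2)$ complexity. This is exactly the restriction of the lifted-plane arrangement of Theorem~\ref{thm:un_cir_annu1} to centers on $L$, which turns the widest-empty-slab instance into a planar one.

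For a fixed $t$, the maximum-width empty annulus centered at $c(t)$ is obtained from the largest gap between two consecutive values among $f_1(t),\dots,f_n(t)$: the inner radius equals the $i$-th smallest distance, the outer radius the $(i+1)$-th smallest, and the width is their difference. The rainbow-bisecting requirement restricts the admissible levels, since the prefix of the $i$ closest points and the suffix of the $n-i$ farthest points must both be rainbows. Because prefix-coverage is monotone increasing in $i$ and suffix-coverage is monotone decreasing in $i$, at each $t$ the admissible gap levels form a contiguous range $[i_{\min}(t),i_{\max}(t)]$, and the configurations $Cir_\mathrm{22}$ and $Cir_\mathrm{21}$ of Lemma~\ref{lem:un_cir_annulus} describe precisely the combinatorial events that can realize an optimum.

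I would then compute the global optimum by a single sweep of $\mathcal{A}$ in the parameter $t$. Rather than sort the $O(n^2)$ crossing events (which would cost an extra logarithmic factor), I use the topological sweep of~\cite{eg-tsaa-89}, which advances through the arrangement by elementary steps, each of which is an adjacent transposition of two points in the current distance order. An adjacent transposition at level $i$ alters the color composition of only the prefix of size $i$ (it deletes one point and inserts one), so, maintaining one color-count table per prefix in $O(n^2)$ total space, I can update the affected prefix, its distinct-color count, and hence $i_{\min}(t)$ and $i_{\max}(t)$ in $O(1)$ per event. Along each edge of $\mathcal{A}$ the two functions bounding a given gap are fixed, so on that sub-arc the width is the smooth function $f_b(t)-f_a(t)=\sqrt{(t-b_b)^2+h_b}-\sqrt{(t-b_a)^2+h_a}$, whose maximum over the valid part of the sub-arc is found in $O(1)$ by solving for its $O(1)$ critical points. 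Since the total number of edges over all levels of $\mathcal{A}$ is $O(n^2)$ and each is charged $O(1)$ work, the computation runs in $O(n^2)$ time and $O(n^2)$ space.

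The step I expect to be the main obstacle is making the per-event cost genuinely $O(1)$ while preserving the overall $O(n^2)$ bound: one must verify that the topological-sweep order, which is not the $t$-order, is compatible with the prefix/suffix color bookkeeping, that is, that every elementary step is indeed an adjacent transposition in the distance ordering so that only a single prefix changes, and that $i_{\min}(t)$ and $i_{\max}(t)$ stay current under these single-prefix flips without any re-scanning. Handling this correctly, together with the pseudoline property that caps the number of events at $O(n^2)$, is what keeps the algorithm within the claimed bounds; the remaining per-arc width maximizations are routine.
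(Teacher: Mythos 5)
Your proposal is correct and takes essentially the same route as the paper: the paper's proof of this lemma is a one-line appeal to the technique of Theorem~\ref{thm:un_cir_annu1} (paraboloid lifting/duality plus a topological sweep with per-level color counters), and your arrangement of distance functions $f_i(t)$ along $L$ is exactly that machinery restricted to a one-parameter family of centers, as you yourself observe. Your write-up is in fact more detailed than the paper's proof, including flagging (and essentially resolving) the compatibility of the topological-sweep order with the prefix/suffix color bookkeeping.
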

\begin{proof}
	Using technique as described before to obtain Theorem~\ref{thm:un_cir_annu1}.
\end{proof}

\section{Conclusion}
In this paper, we have dealt with the problem of computing a maximum-width rainbow-bisecting empty annulus for three type of basic geometric objects among a set of given points on the plane. First, we propose an $O(n^3)$-time and $O(n)$-space algorithm for computing a maximum-width rainbow bisecting empty axis-parallel square annulus. Next, we compute a maximum width rainbow bisecting empty axis-parallel rectangular annulus in
$O(k^2n^2\log n)$-time and $O(n\log n)$-space. Finally, we propose a $O(n^3)$-time and $O(n^2)$-space algorithm for the maximum-width rainbow bisecting empty circular annulus problem. 

One of the interesting open problem in the square and rectangular setting is to study the arbitrary orientation case. Another obvious direction is to improve the running time. Further, the study of any non-trivial lower bound even for the basic problem that is computing the maximum width annulus (rectangle, square or circle) among a set of points on the plane would really be an interesting one.


{
\bibliographystyle{plain}
\bibliography{Ann}
}

\end{document}